\newcommand\E{\ensuremath{\mathbb{E}}}
\newcommand\PP{\ensuremath{\mathbb{P}}}
\renewcommand\O{\ensuremath{\mathcal{O}}}
\newcommand\e{\ensuremath{\mathrm{e}}}
\newcommand\la{{\lambda}}
\newcommand\eps{{\epsilon}}
\newcommand\bt{\boldsymbol{\Theta}}
\newcommand\half{\ensuremath{\frac{1}{2}}}
\newcommand\F{\ensuremath{\mathcal{F}}}
\newcommand\R{\ensuremath{\mathbb{R}}}
\DeclareMathOperator*{\argmax}{arg\, max}
\newtheorem{thm}{Theorem}
\newtheorem{lemma}{Lemma}
\newtheorem{prop}{Proposition}
\theoremstyle{remark}
\newtheorem{rem}{Remark}
\title{Illiquidity Effects in Optimal Consumption-Investment Problems}
\author[M. Ludkovski]{Michael Ludkovski$^\dag$} \address{Department of Statistics and Applied Probability,
University of California, Santa Barbara, CA 93106, USA} \email{ludkovski@pstat.ucsb.edu,
www.pstat.ucsb.edu/faculty/ludkovski}
\author[H. Min]{Hyekyung Min} \address{Department of Mathematics, Case Western Reserve University, Cleveland, OH 44106, USA} \email{hxm182@case.edu}
\thanks{$^\dag$ Corresponding Author. email: ludkovski@pstat.ucsb.edu, tel: 1(805)893-5634. \\ Address: Department of Statistics and Applied Probability,
University of California, Santa Barbara, CA 93106, USA}
\keywords{optimal consumption, liquidity shocks, regime-switching models}
\begin{document}
\begin{abstract}
We study the effect of liquidity freezes on an economic agent optimizing her utility of consumption in a perturbed Black-Scholes-Merton model. The single risky asset follows a geometric Brownian motion but is subject to liquidity shocks, during which no trading is possible and stock dynamics are modified. The liquidity regime is governed by a two-state Markov chain. We derive the asymptotic effect of such freezes on optimal consumption and investment schedules in the two cases of (i) small probability of liquidity shock; (ii) fast-scale liquidity regime switching.
Explicit formulas are obtained for logarithmic and hyperbolic utility maximizers on infinite horizon. We also derive the corresponding loss in utility and compare with a recent related finite-horizon model of Diesinger, Kraft and Seifried (2010).
\end{abstract}

\maketitle

\section{Introduction}
The theory of optimal investment has been one of the cornerstones of the modern mathematical finance. Working in the context of the Black-Scholes model of a single risky asset,
\cite{Merton71}  in
his seminal work obtained closed-form expressions for optimal investment and consumption strategies of an agent with power-utility preferences. These results relied on the fundamental assumptions of frictionless and continuous trading.  In the past thirty five years, extensive efforts have been devoted to reconciling this assumption with real financial markets. A large body of literature (see e.g.\ the textbook
\cite{KarShreveMethods} and references therein) treats optimal investment with (either proportional or fixed) transaction costs; another significant strand addresses optimal investment with discrete trading times \citep{Rogers00,RogersZane,PhamTankov08MF,Matsumoto,TankovPhamGozzi}. Finally, several authors considered optimal investment-consumption models under stochastic volatility \citep{FlemingHernandez03,FPS-Book,BauerleRieder04}.

Recently, the issue of \emph{illiquid} financial markets has been considered. From one direction, several authors \citep{RogersCetin07,CetinJarrowProtter04} have analyzed optimal investment by a large trader whose actions affect the dynamics of the market. From a different perspective, \cite{Diesinger} have treated markets where trading may be interrupted. The latter phenomenon means that during times of crisis markets freeze and often no trading is possible at all, no matter the price. Thus, during major financial dislocations, trading halts and circuit-breakers are implemented to alleviate panic. These shutdowns typically follow a major market drop and are often triggered by the index losing a preset percent of its value within a single trading session. For instance, on October 28, 1997 the New York Stock Exchange (NYSE) experienced two circuit-breakers once the Dow Jones index lost 5\% and then 7.5\% percent of its value. In each case, trading was halted for an hour.
More recently, during the 2008 financial crisis, the Russian RSX exchange was closed by government decree for three days during Sep 20-23, 2008. Many other emerging market bourses in Brazil, India, Thailand, Bangladesh, etc., also faced repeated halts for parts or all of a trading session in September-October 2008. Similarly, after the 9/11 attacks in 2001, the NYSE was closed for four days. On many of the smaller national exchanges, investors must contend with regular trading halts due to political, regulatory and financial upheavals.


More lengthy interruptions have been witnessed during times of war. In the 1940s, the Zurich stock exchange closed for over a month in May 1940; the Frankfurt stock exchange was closed for 5 months in 1945. At the onset of World War I the NYSE was closed from July 31, 1914 until November 28, 1914 \citep{Diesinger}. Moving away from stocks, lengthy liquidity freezes are commonplace in more exotic asset classes. The recent credit crisis is a case in point; 2007-2008 saw major liquidity crises in US sub-prime mortgage backed securities (MBS), Canadian asset backed commercial paper (ABCP) and worldwide collaterized debt obligations (CDOs). The collapse of Lehman Brothers in September 2008 led to several months of virtual shutdown in trading of credit products and there were many anectodal stories of investors unable to sell their ``toxic'' assets at any price.

The effect of such liquidity crises on optimal investment is threefold. First, inability to rebalance the portfolio during the interruption reduces the long-run utility of the agent. Second, liquidity crises are commonly accompanied by major market drops (cf.\ Russia  in 2008, US in 2001, Germany in 1945) which lead to significant unavoidable losses for the agent. Third, during a trading interruption, agents are unable to liquidate their holdings to finance consumption. Thus, agents may experience shortage of funds and are forced to dramatically curtail their consumptions while their investments are frozen. This was widely documented in the mass media in autumn of 2008 and was one of the major mechanisms of the transmission of the financial crisis into the real economy.

In this paper we investigate the effect of such trading interruptions on the optimal investment and consumption strategies of a power-utility agent. We maintain the Black-Scholes model for the risky asset and the assumption of frictionless trading outside of the liquidity event. By keeping our setup identical to the classical Merton model except for the feature of liquidity freezes, we are able to draw clear comparisons to the original benchmark.

Because historically major liquidity crises have been \emph{rare events}, we undertake asymptotic analysis of the effect of illiquidity on optimal strategies. Indeed, due to their rarity, statistical estimation of the frequency of liquidity crises is fraught with difficulties and the corresponding model parameter cannot be well calibrated. One of our contributions are closed-form formulas for the first-order corrections to the classical Merton solutions for power-utility agents. These formulas show the impact of the liquidity crises frequency and duration and provide simple rules of thumb for the investors. Similar asymptotic expansions for other models of liquidity were undertaken in \cite{RogersZane} and \cite{Matsumoto}.


An alternative viewpoint is that liquidity crises are commonplace but have a very short duration. Mathematically, this means that liquidity is a fast-scale process, modulating the slow-scale geometric Brownian motion model of the risky asset. Such theory of fast-scale hybrid diffusions was developed by \cite{IlinYinJOTA99,IlinYinJMAA99}. The effect of fast-scale stochastic volatility on optimal investment-consumption was studied by \cite{BauerleRieder04} in a regime-switching volatility model and by \cite{FPS-Book} in a diffusion setting. The key observation is that the singular perturbation of the HJB equations requires averaging of the Sharpe ratios rather than of the volatilities themselves. Our results below (fully explicit for the logarithmic and hyperbolic utilities) complement this analysis; in particular our perturbation is of a new type since the trading control is only active in one regime.
On a practical level, this homogenized model allows investors to adjust their strategies to account for a fast-scale friction in the market.

Our model is closest to the recent work of \cite{Diesinger}. In the latter work, which was one of the inspirations for our research, the authors consider a model with similar motivations and techniques. Following their setup, we treat liquidity as a two-state Markov chain which modulates the market dynamics and imposes constraints on the trading strategies.
While \cite{Diesinger} work on a finite horizon and study maximization of terminal utility, we instead focus on an optimal consumption model on infinite horizon.
Also, \cite{Diesinger} do not carry out any asymptotic analysis and do not treat fast-scale liquidity. An earlier study of illiquidity effects was carried out by \cite{SchwartzTebaldi} who considered optimal investment where the liquidity shock times are known in advance. Finally, our use of a finite-state Markov chain to modulate the investment regimes also resonates with regime-switching models of optimal investment, see \cite{SotomayorCadenillas09,Zariphopoulou92}.

The rest of the paper is organized as follows. Section \ref{sec:market} gives a formal description of our hybrid diffusion market model and the optimization problem for the investor. Section \ref{sec:log-utility} provides the solution for a log-utility maximizing investor. Section \ref{sec:power-utility} repeats the analysis for a general investor with HARA utility; in particular Section \ref{sec:hyperbolic-utility} provides explicit solutions for the case of hyperbolic utility; also implicit formulas are obtained for square-root and inverse-square-root utilities. All these analytical results are illustrated with several numerical examples in Section \ref{sec:numeric}. In Section \ref{sec:homogenized} we study the fast-scale limit of our illiquidity model and its effect on optimal investment and consumption. Section \ref{sec:finite-horizon} extends to a finite-horizon setting where we consider a log-utility maximizing agent, as well as provide the asymptotic analysis of the terminal log-utility maximizer of \cite{Diesinger}.  Finally, Section \ref{sec:conclusion} concludes.

\section{Market with Regime-Switching Liquidity}\label{sec:market}
Let $(M_t)$ be a two-state continuous Markov chain with infinitesimal generator
$$
Q = \begin{pmatrix} -\lambda_{01} & \la_{01} \\ \la_{10} & -\la_{10} \end{pmatrix}.
$$
Associated with $M$ are two Poisson processes $N_{01}$ and $N_{10}$ with intensity rates of $\la_{01}$ and $\la_{10}$ respectively. The process $N_{i,1-i}$, $i\in \{0,1\}$ can be viewed as a counting process for the number of transitions of $M$ between state $i$ and $1-i$. Thus, $M$ has the representation $dM_t = 1_{\{M_{t-}=0\}} dN_{01}(t) - 1_{\{M_{t-}=1\}}dN_{10}(t)$.

The process $M$ is a proxy for the market liquidity.
When $M_t = 0$, the market is liquid and the usual Black-Scholes model for the risky asset (henceforth called `stock') price applies:
\begin{align}
dS_t = \mu S_t \,dt + \sigma S_t \, dW_t,
\end{align}
with $(W_t)$ a one-dimensional Wiener process on a stochastic basis $(\Omega, \mathcal{F}, \mathbb{P})$.
When $M_t = 1$, the market is illiquid; the stock is non-traded and its price experiences a deterministic exponential growth at rate $\alpha \le r$:
\begin{align*}
dS_t = \alpha S_t \, dt.
\end{align*}
Moreover, at the times of transition from $M_{t-} = 0$ to $M_t = 1$, the stock price experiences an instantaneous drop of $L\%$, $S_t = (1-L)S_{t-}$. This jump represents the abrupt decrease in stock price due to a catastrophic market event (such as 9/11 in the USA, collapse of Lehman Brothers, etc.). Overall, the stock obeys a hybrid jump-diffusion model
\begin{align}\label{eq:s-dynamics}
dS_t = \vec{\mu}_{M_t} S_t \,dt + \vec{\sigma}_{M_t} S_t \, dW(t) - L S_t 1_{\{M_{t-} = 0\}} dN_{01}(t),
\end{align}
with $\vec{\mu} = [\mu, \alpha]^T, \; \vec{\sigma} = [\sigma, 0]^T$.

\subsection{Wealth Dynamics}
We consider an economic agent who solves an optimal investment-consumption problem in this regime-switching market. Namely, at any given date $t$  the agent (i) invests a fraction $\pi_t$ of her total wealth in the stock; (ii) invests the remainder of her wealth in a bank account paying interest at fixed rate $r$; (iii) consumes at rate $c_t$ per unit time. We make the additional assumption that \emph{consumption is only financed through cash}. This introduces an additional \emph{cash crunch} constraint whereby the agent may run out of money to consume during an illiquidity shock.

Denote by $(X_t)$ the total wealth of the agent at date $t$. Then $(X_t)$ evolves according to
\begin{align}\notag
\frac{dX_t}{X_t} & = r (1 - \pi_t) \, dt + \pi_t\, \frac{dS_t}{S_t} - \frac{c_t}{X_t} \,dt \\
\label{eq:x-dynamics}
& = (r-\frac{c_t}{X_t}) \,dt + 1_{\{M_{t-} = 0\}} \bigl\{\pi_t (\mu-r) \,dt + \pi_t \sigma \,dW(t) - L \pi_t \,dN_{01}(t)\bigr\} + 1_{\{M_{t-} = 1\}} \pi_t(\alpha-r)  \, dt.
\end{align}
The instantaneous jump of size $L$ in the asset price at the beginning of a liquidity shock has a double effect: it decreases overall wealth, and also increases the fraction of money held in cash, mitigating the cash crunch. Overall, if today's wealth is $X_{t-}$ with a proportion $\pi_{t-}$ held in stock, an instantaneous downward jump of $L\%$ in the stock price causes the total wealth to decrease to $X_t = ((1-\pi_{t-} L)X_{t-}$ and the fraction invested in $S$ to shrink to $\pi_t = g(\pi_{t-})$ where
$$g(\pi) \triangleq \frac{\pi(1- L)}{(1 - \pi L)}.$$
Note that $g(\cdot)$ is an increasing convex function from $[0,1]$ onto $[0,1]$ which reduces to the identity map $g(\pi) = \pi$ when $L=0$.

\begin{rem}
Our model can be straightforwardly extended to the case where the jumps of $S$ at transition times of $M$ have an arbitrary discrete distribution $f_L(\cdot)$ which is bounded away from 1 (and such that its realizations are independent of the rest of the model). All the results below would continue to hold in that case after replacing with $g(\pi) = \pi (1-E[L]) \left( 1-\pi E[L]  \right)^{-1}$. We maintain constant jump sizes for ease of notation and interpretation.
\end{rem}

In contra-distinction from the classical Merton setting, the number of shares $\pi_t X_t/S_t$ held by the investor has liquidity constraints.
Namely, admissible trading strategies are required to maintain the above quantity constant on the time intervals where $M_t = 1$. This represents  market illiquidity whereby the agent is completely unable to trade stock, and corresponds to \emph{exogenously} given dynamics of $\pi_t$ when $M_t = 1$. To make sure the agent does not go bankrupt, the constraint $\pi_t \in [0,1]$ is imposed. Thus, no short-sales or leveraged positions are allowed. Conditional on $M_t = 1$, the dynamics of $(\pi_t)$  are (cf.\ \cite[Lemma 2.1]{Diesinger})
\begin{align}\label{eq:pi-dynamics}
d\pi_t = \pi_t[(1-\pi_t)(\alpha-r) + c_t/X_t] \,dt,
\end{align}
where the second term represents the increase in proportion of wealth in stock due to consumption that is paid for with cash.

Our aim is to quantify the effect of market illiquidity on optimal consumption/investment.
%
In particular, two asymptotic regimes are of interest. First, the case $\la_{01}$ being small represents a marginal possibility of liquidity breakdown. Observe that with $\la_{01}=0$ and $M_0 = 0$ we recover the classical Merton problem. Therefore, the asymptotics of the limit $\la_{01}\to 0$ can be viewed as a \emph{regular perturbation} of this model that takes into account the agent's response to illiquidity.

Second, a \emph{homogenized} limiting model is obtained by re-scaling the generator $Q$ of $M$ to be $Q^\eps \triangleq \frac{Q}{\eps}$ for $\eps \ll 1$. This represents a market that exhibits a fast scale of liquidity regimes that impose additional constraints on the trading agent. In other words, both $\la_{01}$ and $\la_{10}$ are large, with a fixed ratio
$\bar{\la} \triangleq \frac{\la_{10}}{\la_{01} + \la_{10}}$ representing the proportion of time spent in liquid regime. We also rescale the jumps as $L(\eps) = \bar{L} \eps$ which in the limit becomes a negative drift on the stock. In the limit $\eps \to 0$ we obtain a \emph{singular perturbation} of the original Black-Scholes model. Such singularly perturbed hybrid diffusions with fast switching were studied by \cite{IlinYinJOTA99}.

\begin{rem}
In the classical Merton problem with strictly positive consumption, the only constraint on trading is no short-sales, $\pi_t \ge 0$. With liquidity freezes, we have the additional no-leveraging constraint $\pi_t \le 1$. Thus, even as $\la_{01} \to 0$ or $\eps \to 0$ above, if $\hat{\pi} > 1$ our value functions do not converge to the original Merton solutions. Consequently, in markets with large Sharpe ratios, a liquidity \emph{gap} would be present even with negligible chance of market freeze. See Section \ref{sec:large-sharpe-log} and Remark \ref{rem:large-sharpe-hara} below.
\end{rem}

\subsection{Consumption on Infinite Horizon}
As our main model, we consider an agent maximizing cumulative utility of consumption on infinite horizon. We assume that consumption is absolutely continuous, so that we can define a consumption rate $c_t$ with respect to the Lebesgue measure on $\R_+$. The objective function is then given in terms of time-additive utility of consumption,
\begin{align}\label{eq:obj}
\sup_{(\pi,c) \in \mathcal{A}} \E^x \left[ \int_0^\infty \e^{-\rho t} u(c_t) \, dt \right],
\end{align}
where $\rho > 0$ is the inter-temporal substitution factor for consumption and $u : \mathbb{R}_+ \to \R$ is the strictly concave utility function satisfying the Inada conditions $\lim_{x \downarrow 0} u'(x) = +\infty$, $\lim_{x \to +\infty} u'(x) = 0$. In \eqref{eq:obj}, the expectation is with respect to measure $\PP^{x}$ conditional on initial endowment $X_0 = x$. Below we will focus on the case of Hyperbolic Absolute Risk Aversion (HARA, also known as CRRA or constant relative risk aversion) utilities, $u(x) = \frac{x^\gamma}{\gamma}$ with $\gamma \in (-\infty,0) \cup (0,1)$ and $u(x) = \log x$ for $\gamma = 0$. With HARA utilities, investment proportions are known to be independent of current wealth which also carries over to our model and simplifies the interpretation of our results.

Admissible strategies for \eqref{eq:obj} are defined as follows. The observable filtration is always $\mathbb{F} = (\F_t)$ generated jointly by $S$ and $M$. The investment process $(\pi_t)$ is required to be an $\F$-adapted process with $\E[ \int_0^t \pi_s^2 \,ds] < \infty$ for any $t < \infty$. Next, we also require that $(c_t)$ be $\F$-adapted with $\E[ \int_0^\infty \e^{-\rho t} |u(c_t)| \, dt] < \infty$. In the case where $u(0) = -\infty$, ruin is infinitely costly and therefore any admissible strategy must additionally satisfy $\PP( \pi_t < 1) = \PP( X_t > 0) = 1$ for any $t<\infty$. We denote by $\mathcal{A}_0$ the set of all admissible trading strategies that satisfy the above constraints, and finally by $\mathcal{A} \subseteq \mathcal{A}_0$ the subset given in \eqref{eq:obj} that in addition satisfies the freeze constraint during liquidity shocks expressed in \eqref{eq:pi-dynamics}.

To solve \eqref{eq:obj}, we employ the usual Dynamic Programming paradigm. Denote by $V^0(x)$ the value function of the agent with initial wealth endowment of $x$ and conditional on $M_0 = 0$; similarly denote by $V^1(\pi,x)$ the value function of an agent with an endowment of $x$ and a fraction $\pi$ of her wealth in stock, conditional on $M_0=1$. Standard stochastic control arguments suggest the following Hamilton-Jacobi-Bellman equation
\begin{align}\label{eq:hjb-inf-horizon} \left\{
\begin{aligned}
\sup_{c \ge 0} & \Bigl\{ -\rho V^1 + (x(r + (\alpha-r) \pi)-c) V^1_x + \pi(1-\pi)(\alpha-r) V^1_\pi + \pi c/x V^1_\pi \\ & \qquad + \lambda_{10}( V^0(x) - V^1(\pi,x)) + u(c) \Bigr\} = 0, \\
\sup_{\pi \in [0,1],c \ge 0} & \Bigl\{ -\rho V^0 + (x(r+(\mu-r) \pi) - c)V^0_x + \half x^2 \pi^2 \sigma^2 V^0_{xx} \\ & \qquad +\la_{01}( V^1( \frac{\pi(1-L)}{1-\pi L}, (1 - \pi L)x) - V^0(x)) + u(c) \Bigr\} = 0, \\
\lim_{x \downarrow 0} V^1(\pi,x) & = u(0),\quad\text{and}
\qquad \lim_{x \downarrow 0} V^0(x) = u(0),
\end{aligned}\right.
\end{align}
In the illiquid regime 1 only consumption can be optimized; the liquidity shock expires at rate $\la_{10}$ and once it is over liquid trading resumes and remaining utility will then be $V^0(x)$. In the meantime, the fraction invested in stock evolves according to \eqref{eq:pi-dynamics} and total wealth $X_t$ according to \eqref{eq:x-dynamics}. Analogously, starting in the liquid regime the agent optimizes the positive fraction of wealth invested in the stock and consumption. All the while she is anticipating the possibility of liquidity shocks that occur at rate $\la_{01}$ and entail immediate loss of $L\%$ in the stock value. The different dynamics of $(X_t)$ under the two liquidity regimes lead to modified  derivative terms with respect to $x$ in \eqref{eq:hjb-inf-horizon}.

We now state the following verification theorem; the proof is given in the Appendix.
\begin{prop}\label{thm:verify}
Suppose that $J^0(x) \in \mathcal{C}^2(\R_+), J^1(\pi,x) \in \mathcal{C}^{1,1}(\R_+ \times [0,1))$
are two smooth functions with polynomial growth in $x$ that satisfy the HJB equation \eqref{eq:hjb-inf-horizon}.
Then $J^0(x) \ge V^0(x)$, $J^1(\pi,x) \ge V^1(\pi,x)$. Moreover, if there exist continuous functions $\pi^*, c^{0,*}, c^{1,*}$ satisfying
\begin{align*}
\pi^*(x) &\in \argmax_{\pi \in [0,1]} \Big\{ x(\mu-r)\pi J^0_x + \half x^2 \pi^2 \sigma^2 J_{xx}^0  + \la_{01} J^1( g(\pi), (1 - \pi L)x) \Big\},\\
c^{0,*}(x) &\in \argmax_{c \ge 0} \{ - c J^0_x + u(c) \}, \qquad c^{1,*}(\pi,x) \in \argmax_{c\ge 0} \{ - c J^1_x + \pi c/x J^1_\pi + u(c) \},
\end{align*}
such that the strategy $(\pi,c)$ is admissible,
then $J^0(x) = V^0(x)$ and $J^1(\pi,x) = V^1(\pi,x)$ and $\pi^*, c^{0,*}, c^{1,*}$ are optimal strategies.
\end{prop}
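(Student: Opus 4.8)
The plan is to run the standard verification argument for controlled switching jump-diffusions, adapted to the two-regime structure of \eqref{eq:hjb-inf-horizon}. First I would assemble the two candidate functions into a single test function on the augmented state space $(M_t, \pi_t, X_t)$ by setting
\begin{equation*}
\Phi_t := \e^{-\rho t}\bigl( 1_{\{M_t = 0\}} J^0(X_t) + 1_{\{M_t = 1\}} J^1(\pi_t, X_t)\bigr),
\end{equation*}
and abbreviating the bracketed function as $J^{M_t}(\Xi_t)$, where $\Xi_t$ denotes the regime-appropriate state. Applying the It\^o--Dynkin formula to $\Phi_t$ — whose diffusive part acts only in regime $0$, whose deterministic drift in regime $1$ reproduces the $J^1_\pi$ term dictated by \eqref{eq:pi-dynamics}, and whose two jump mechanisms (at rate $\la_{01}$, with the simultaneous moves $x \mapsto (1-\pi L)x$ and $\pi \mapsto g(\pi)$; and at rate $\la_{10}$ back to regime $0$) produce exactly the $\la_{01}$- and $\la_{10}$-coupling terms — yields, for any admissible $(\pi,c) \in \mathcal{A}$,
\begin{equation*}
d\Phi_t = \e^{-\rho t}\bigl( \mathcal{L}^{c}J^{M_t} - \rho J^{M_t}\bigr)\,dt + d\mathcal{M}_t,
\end{equation*}
where $\mathcal{M}$ collects the Brownian and compensated-jump martingale increments and $\mathcal{L}^c$ is the control-dependent generator appearing inside the braces of \eqref{eq:hjb-inf-horizon}.

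Because $J^0, J^1$ solve the HJB system, the bracketed drift obeys $\mathcal{L}^c J^{M_t} - \rho J^{M_t} + u(c_t) \le 0$ for every admissible choice of controls, with equality when $(\pi,c)$ equals the candidate optimizers. Next I would localize: introduce stopping times $\tau_n = \inf\{t : X_t \notin (1/n, n)\} \wedge n$ so that $\mathcal{M}_{\cdot \wedge \tau_n}$ is a genuine martingale — here the requirement $\E[\int_0^t \pi_s^2\,ds] < \infty$ and the polynomial growth of $J$ in $x$ control the integrands — take expectations, and rearrange to
\begin{equation*}
J^{M_0}(\Xi_0) \ge \E\Bigl[ \int_0^{\tau_n} \e^{-\rho t} u(c_t)\,dt \Bigr] + \E\bigl[ \e^{-\rho \tau_n} J^{M_{\tau_n}}(\Xi_{\tau_n})\bigr].
\end{equation*}
Letting $n \to \infty$ and invoking dominated convergence on the consumption integral, using the admissibility condition $\E[\int_0^\infty \e^{-\rho t}|u(c_t)|\,dt] < \infty$, then gives $J^0(x) \ge V^0(x)$ and $J^1(\pi, x) \ge V^1(\pi, x)$ after taking the supremum over $\mathcal{A}$.

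The crux of the argument is the transversality term $\E[\e^{-\rho \tau_n} J^{M_{\tau_n}}(\Xi_{\tau_n})]$, which I must show does not contribute positively in the limit. Since $\pi_t \in [0,1]$ the wealth drift is uniformly bounded and $X$ grows at most exponentially, so its polynomial moments satisfy $\E[X_t^p] \le C_p \e^{\kappa_p t}$; combined with the polynomial growth of $J$ this forces $\e^{-\rho t}\E[|J^{M_t}(\Xi_t)|] \to 0$ once $\rho$ dominates the relevant moment-growth rate, so the boundary contribution vanishes. When $u(0) = -\infty$ the lower escape of $X$ must additionally be ruled out, which is precisely what the ruin constraint $\PP(X_t > 0) = 1$ together with the boundary conditions $\lim_{x \downarrow 0} J = u(0)$ supply.

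Finally, re-running the identical computation with the candidate optimizers $\pi^*, c^{0,*}, c^{1,*}$ turns every inequality into an equality, so $\Phi_{\cdot \wedge \tau_n}$ is a martingale and the same limiting passage delivers $J^0 = V^0$, $J^1 = V^1$ and optimality of the candidate strategies. I expect the transversality/growth estimate — reconciling the polynomial-growth hypothesis on $J$ with the exponential moment bounds on $X_t$ and the degenerate boundary at $x = 0$ in the $u(0) = -\infty$ case — to be the main technical obstacle; the generator computation and the martingale localization are otherwise routine.
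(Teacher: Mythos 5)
Your overall route is the same as the paper's: assemble the regime-indexed candidate $\Phi_t = \e^{-\rho t}J^{M_t}(\Xi_t)$, apply the It\^o--Dynkin formula so the $\la_{01}$- and $\la_{10}$-coupling terms appear as compensated jumps, use the HJB inequality on the drift, localize, and pass to the limit. But there are two concrete gaps. First, your localization $\tau_n = \inf\{t : X_t \notin (1/n,n)\}\wedge n$ controls only wealth and time, not $\pi$. The hypothesis gives $J^1 \in \mathcal{C}^{1,1}(\R_+\times[0,1))$ only, and $J^1$ genuinely blows up at the cash-crunch boundary (in the log case $h(\pi)\to-\infty$ as $\pi\uparrow1$), while \eqref{eq:pi-dynamics} pushes $\pi_t$ monotonically toward $1$ during a freeze. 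Hence on $[0,\tau_n]$ the compensated-jump integrand $\la_{10}\bigl(J^0(X_s)-J^1(\pi_s,X_s)\bigr)$ and the drift term $\pi_s (c_s/X_s) J^1_\pi$ are \emph{not} bounded, and your assertion that $\mathcal{M}_{\cdot\wedge\tau_n}$ is a true martingale fails exactly where the model is interesting. The paper repairs this with a third stopping time, $\tau_3 = \inf\{t : \{\pi_t\ge a_3\}\cap\{M_t=1\}\}$ with $a_3<1$, on whose stopped interval everything is bounded, and only afterwards sends $a_3\to1$ by monotone convergence.

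Second, your transversality step imposes an unstated parameter restriction. You discard $\E[\e^{-\rho\tau_n}J^{M_{\tau_n}}(\Xi_{\tau_n})]$ by arguing $\e^{-\rho t}\E[|J(\Xi_t)|]\to0$ ``once $\rho$ dominates the relevant moment-growth rate'' --- but nothing in the proposition guarantees $\rho>\kappa_p$, so as written the inequality $J\ge V$ is only proved for sufficiently large $\rho$. The paper avoids any such comparison: it first reduces to $u(0)=0$ (restricting the written proof to $u(0)>-\infty$ and deferring $\gamma<0$ to homotheticity arguments \`a la Shreve--Soner, via Lemma \ref{lem:hom-degree-gamma}), so that $u\ge0$ and the terminal term in the analogue of your display is \emph{nonnegative and simply dropped}; the supersolution inequality then follows from two applications of monotone convergence ($a_1\to0$, $a_2\to\infty$, $a_3\to1$, then $t\to\infty$), with no decay estimate needed. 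Note also that for $u(0)=-\infty$ your polynomial-growth hypothesis says nothing about $x\downarrow0$, where $J\to-\infty$, and the remark that the ruin constraint ``supplies'' the needed control does not actually bound $\E[\e^{-\rho\tau_n}J]$ from below, which is what your equality direction requires. The paper handles optimality not by a vanishing-transversality limit but by verifying the first-order conditions for $c^{0,*},c^{1,*},\pi^*$ (Step 3) and invoking the Fleming--Soner verification theorem; if you prefer your martingale route for the equality, you must either prove the transversality limit along the candidate optimal strategy or adopt the paper's normalization-plus-citation structure.
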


\subsection{Asymptotics for Small $\la_{01}$}
For the asymptotic case $\la_{01}\to 0$, we may uncouple the two value functions above. Indeed, on finite horizon when $\la_{01}$ is sufficiently small, the probability of more than one $0 \to 1$ transition of $(M_t)$ is negligible.
In fact, the probability of $n$ such transitions is approximately
(using $\la_{01} \ll \la_{10}$) $\mathcal{O}( \e^{-n \la_{01}}) =
o(\e^{-\la_{01}})$ for $n>1$. Thus, assuming that $M_0 = 0$, we need to consider at most two liquidity regime-changes in $(M_t)$. For this purpose we will define the value functions $V$ and $W$ as follows: $V(x)$ represents the maximum value an agent can extract when $M_0 = 0$ and no liquidity shock has occurred so far; $W(\pi,x)$ represents the maximum value when $M_0 = 1$ so the liquidity shock is already present and after its conclusion we revert to the classical Merton setting (with value function $\hat{V}(x)$).

\begin{lemma}
We have $V^0(x) = \hat{V}(x) + \la_{01} V(x) + \mathcal{O}(\la_{01}^2)$  where
\begin{align}\label{eq:uncoupled-hjb}\left\{
\begin{aligned}
\sup_{\pi \in [0,1],c \ge 0} & \Bigl\{ -\rho V + (x(r+(\mu-r) \pi) - c)V_x + \half x^2 \pi^2 \sigma^2 V_{xx} \\ & \qquad +\la_{01}( W(\frac{\pi(1-L)}{1-\pi L}, (1 - \pi L)x) - \hat{V}(x)) + u(c) \Bigr\} = 0;\\
\sup_{c \ge 0} & \Bigl\{ -\rho W + (x(r + (\alpha-r) \pi)-c) W_x + \pi(1-\pi)(\alpha-r) W_\pi + \pi c/x W_\pi \\ & \qquad + \lambda_{10}( \hat{V}(x) - W(\pi,x)) + u(c) \Bigr\} = 0.
\end{aligned}\right.
\end{align}
\end{lemma}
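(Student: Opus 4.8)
The plan is to treat $\la_{01}$ as a small regular-perturbation parameter and to pass from the coupled system \eqref{eq:hjb-inf-horizon} to the decoupled system \eqref{eq:uncoupled-hjb} by exploiting that every genuinely coupling term carries an explicit prefactor $\la_{01}$. First I would fix the base case $\la_{01}=0$. Setting $\la_{01}=0$ with $M_0=0$ shuts off all $0\to1$ transitions, so regime $0$ is classical Merton and $V^0|_{\la_{01}=0}=\hat{V}$ solves the reduced second line of \eqref{eq:hjb-inf-horizon} (the standard Merton HJB). Likewise, starting in regime $1$ with $\la_{01}=0$ means the shock expires at rate $\la_{10}$ into a shock-free Merton world, so $W:=V^1|_{\la_{01}=0}$ is $\la_{01}$-independent and solves exactly the second line of \eqref{eq:uncoupled-hjb}, namely the first line of \eqref{eq:hjb-inf-horizon} with $V^0$ replaced by its limit $\hat{V}$. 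This identifies the two zeroth-order objects and yields the only a priori regularity in $\la_{01}$ I need: $V^1 = W + \mathcal{O}(\la_{01})$ and $V^0 = \hat{V} + \mathcal{O}(\la_{01})$.

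Next I would decouple. In the $V^0$-line of \eqref{eq:hjb-inf-horizon} the unknown $V^1$ and the self-term $V^0$ enter only through the bracket $\la_{01}\bigl(V^1(g(\pi),(1-\pi L)x)-V^0(x)\bigr)$. Substituting $V^1=W+\mathcal{O}(\la_{01})$ and $V^0=\hat{V}+\mathcal{O}(\la_{01})$ there replaces the bracket by $\la_{01}\bigl(W(g(\pi),(1-\pi L)x)-\hat{V}(x)\bigr)$ at the cost of an error $\la_{01}\cdot\mathcal{O}(\la_{01})=\mathcal{O}(\la_{01}^2)$; this is precisely the first line of \eqref{eq:uncoupled-hjb}. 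The envelope property of the supremum---its maximiser $(\pi^*,c^*)$ varies continuously with $\la_{01}$ and keeps $\pi^*$ in the compact set $[0,1]$---guarantees the $\mathcal{O}(\la_{01}^2)$ residual is not amplified by the optimisation. Hence the decoupled pair $(V,W)$ defined by \eqref{eq:uncoupled-hjb} solves the coupled system \eqref{eq:hjb-inf-horizon} up to an $\mathcal{O}(\la_{01}^2)$ residual, which gives the claimed expansion for $V^0$ once the residual is transferred to the value function.

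To make that last transfer rigorous I would use one of two routes. Probabilistically, $V^0$ and $V$ are the values of control problems that agree on every path of $(M_t)$ with at most one $0\to1$ transition; paths with two or more such transitions require a return to regime $0$ followed by a second jump, contributing an extra factor $\la_{01}$, so their expected discounted utility impact is $\mathcal{O}(\la_{01}^2)$, using $\rho>0$, the polynomial growth of $u$, and the integrability constraints defining $\mathcal{A}$. Alternatively, via PDE comparison, I would show that $V\pm C\la_{01}^2\phi$ is a super-/sub-solution of \eqref{eq:hjb-inf-horizon} for a suitable $\mathcal{O}(1)$ barrier $\phi$ and invoke the inequality direction of Proposition \ref{thm:verify} to sandwich $V^0$.

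\textbf{The main obstacle} is the uniformity of the a priori bounds $V^1-W=\mathcal{O}(\la_{01})$ and $V^0-\hat{V}=\mathcal{O}(\la_{01})$ over the unbounded domain $x\in\R_+$ under only polynomial growth, together with uniform control of the maximisers $(\pi^*,c^*)$ so that the envelope step and the comparison are legitimate. Here I expect the HARA scaling (which reduces each line of the system to an ODE in a single variable with $\pi$ ranging over the compact set $[0,1]$) together with the discount $\rho>0$ to supply the required uniform estimates and to pin down the order of the remainder.
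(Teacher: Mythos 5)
Your argument is essentially the paper's own: the paper proves the lemma by the same formal regular perturbation, expanding $V^0 = V^{00} + \la_{01}V^{01} + \O(\la_{01}^2)$ and $V^1 = V^{10} + \la_{01}V^{11} + \O(\la_{01}^2)$ in \eqref{eq:hjb-inf-horizon} and matching powers of $\la_{01}$ (your base case identifies $V^{00}=\hat{V}$ and $V^{10}=W$, and your substitution into the $\la_{01}$-bracket is exactly the first-order matching), while your probabilistic transfer of the $\O(\la_{01}^2)$ residual is a rigorous rendering of the at-most-one-transition heuristic the paper states immediately before the lemma, and your comparison-principle alternative and uniformity caveats go beyond what the paper attempts, since its proof remains formal. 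One small point affecting you and the paper equally: since any $V$ solving the first line of \eqref{eq:uncoupled-hjb} satisfies $V|_{\la_{01}=0}=\hat{V}$, the conclusion your argument actually delivers (and the reading consistent with how the paper later uses \eqref{eq:uncoupled-hjb}, e.g.\ expanding $b = \hat{f} + \la_{01}b_1 + \O(\la_{01}^2)$) is $V^0 = V + \O(\la_{01}^2)$, not literally $V^0 = \hat{V} + \la_{01}V + \O(\la_{01}^2)$ as displayed in the statement.
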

\begin{proof}
Consider the formal asymptotic expansions $V^0(x) = V^{00}(x) + \la_{01}V^{01}(x) + \la_{01}^2 V^{02}(x) + \mathcal{O}(\la_{01}^3)$ and $V^1(\pi,x)  = V^{10}(\pi, x) + \la_{01}V^{11}(\pi, x) + \la_{01}^2 V^{12}(\pi, x) + \mathcal{O}(\la_{01}^3)$. Plugging into \eqref{eq:hjb-inf-horizon} and matching powers of $\la_{01}$ shows that
$V^0(x) = \hat{V}(x)$ is the solution of the classical Merton problem, while $V^{10}(\pi,x) \equiv W(\pi,x)$ solves the second equation in \eqref{eq:uncoupled-hjb}. Matching terms of $\mathcal{O}(\la_{01})$ in the equation for $V^0$ shows that $V^{01}(x) \equiv V(x)$ solves the first equation in \eqref{eq:uncoupled-hjb}.
\end{proof}

\section{Log Utility}\label{sec:log-utility}
We now specialize to the case $u(x) = \log x$. We first recall the following classical result  (see \cite[Chapter 4]{KarShreveMethods}).

\begin{prop}[Example 9.24, p. 150 in \cite{KarShreveMethods}]\label{lem:merton-inf-log}
The solution of the optimal consumption problem with log-utility is given by $\hat{V}(x) = \frac{1}{\rho} \log x + \hat{h}$ with
\begin{align}\label{eq:merton-inf-horizon-log}
\hat{h} =  \frac{r/\rho - 1 + \log \rho}{\rho} + \frac{\theta^2}{2 \rho^2}, \qquad \theta \triangleq \frac{(\mu-r)}{\sigma}.
\end{align}
Moreover, the optimal consumption/investment strategies are given by $\hat{c} = \rho x$, $\hat{\pi} = \frac{\mu-r}{\sigma^2} = \frac{\theta}{\sigma}$.
\end{prop}

In the special case $\alpha=r$, meaning that the excess return of the stock is zero in the illiquid regime, an explicit solution to \eqref{eq:hjb-inf-horizon} is possible.
\begin{thm}\label{thm:log}
The solution of \eqref{eq:hjb-inf-horizon} for $u(x) = \log x$ and $\alpha=r$ is given by $V^1(\pi,x) = \frac{1}{\rho} \log x + h(\pi)$ and $V^0(x) = \frac{1}{\rho} \log x + b$ where
\begin{equation}\label{eq:h-log}
h(\pi)=\frac{
r/\rho-1 + \log \rho + \la_{10} b + \log\left(1-\pi^{(1+\lambda_{10}/\rho)}\right)}{\rho+\lambda_{10}},
\end{equation}
and
\begin{equation}\label{eq:b-log}
b = \frac{r/\rho-1 + \log \rho}{\rho} + \zeta(\pi^*) \cdot \frac{ \rho + \la_{10}}{\rho(\rho+ \la_{01}+\la_{10})},
\end{equation}
\begin{align}\label{eq:zeta}
\zeta(\pi^*) \triangleq \sup_{\pi\in[0,1]} \left\{ \frac{(\mu-r) \pi - \pi^2 \sigma^2/2 }{\rho} +  \frac{\la_{01}}{\rho+\lambda_{10}}\log\left(1-g(\pi)^{(1+\lambda_{10}/\rho)}\right)  + \frac{\la_{01}}{\rho} \log (1-\pi L)  \right\}.
\end{align}
The optimal investment strategy in the liquid regime is the maximizer $\pi^*$ in \eqref{eq:zeta} and the optimal consumption schedules are
\begin{align}\label{eq:log-c1-star}
 c^{1,*}(\pi,x) &= \rho x \cdot (1-\pi^{(1+\lambda_{10}/\rho)});\\ \label{eq:log-c0-star}
 c^{0,*}(x) & = \rho x.
\end{align}

\end{thm}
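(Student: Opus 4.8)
The plan is to read Theorem~\ref{thm:log} as an application of the verification result, Proposition~\ref{thm:verify}. I would posit the explicit candidates $J^0(x)=\frac{1}{\rho}\log x + b$ and $J^1(\pi,x)=\frac{1}{\rho}\log x + h(\pi)$, determine the constant $b$ and the function $h$ so that they solve the coupled system \eqref{eq:hjb-inf-horizon} with $\alpha=r$, read off the maximizers, and then invoke Proposition~\ref{thm:verify} to conclude that the candidates equal $V^0,V^1$ and that the maximizers are optimal. The multiplicative separation $\frac{1}{\rho}\log x$ plus a term independent of $x$ is forced by the homotheticity of the log utility and by the classical Merton form of Proposition~\ref{lem:merton-inf-log}; the substance of the proof is to pin down $h$ and $b$.

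First I would treat the illiquid equation. With $\alpha=r$ the $(\alpha-r)$ drift terms vanish, $V^1_x=\frac{1}{\rho x}$ and $V^1_\pi=h'(\pi)$, and the $\log x$ contributions cancel as expected. The inner supremum over $c$ of $-\frac{c}{\rho x}+\frac{\pi c}{x}h'(\pi)+\log c$ is attained at $c^{1,*}=x\big/\!\left(\frac1\rho-\pi h'(\pi)\right)$, provided $\frac1\rho-\pi h'(\pi)>0$. Inserting the optimizer back reduces the illiquid HJB to the first-order ODE
\[
(\rho+\la_{10})\,h(\pi)+\log\!\left(\frac1\rho-\pi h'(\pi)\right)=\frac r\rho+\la_{10}b-1 .
\]
The key computation is to solve this. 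I would verify that the candidate \eqref{eq:h-log} works: differentiating gives $h'(\pi)=-\pi^{\la_{10}/\rho}\big/\!\left[\rho\big(1-\pi^{(1+\la_{10}/\rho)}\big)\right]$, whence $\frac1\rho-\pi h'(\pi)=\big[\rho(1-\pi^{(1+\la_{10}/\rho)})\big]^{-1}$ and $\log\!\left(\frac1\rho-\pi h'(\pi)\right)=-\log\rho-\log\!\left(1-\pi^{(1+\la_{10}/\rho)}\right)$, so the ODE collapses to exactly \eqref{eq:h-log}. This simultaneously produces the consumption rule \eqref{eq:log-c1-star}, since $c^{1,*}=x\big/\!\left(\frac1\rho-\pi h'(\pi)\right)=\rho x\big(1-\pi^{(1+\la_{10}/\rho)}\big)$.

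Next I would turn to the liquid equation, where $V^0_x=\frac{1}{\rho x}$, $V^0_{xx}=-\frac{1}{\rho x^2}$, and $V^1(g(\pi),(1-\pi L)x)=\frac1\rho\log(1-\pi L)+\frac1\rho\log x+h(g(\pi))$. The supremum over $c$ of $-\frac{c}{\rho x}+\log c$ gives immediately $c^{0,*}=\rho x$, i.e.\ \eqref{eq:log-c0-star}, with optimized value $-1+\log\rho+\log x$. After cancelling $\log x$ and inserting \eqref{eq:h-log} for $h(g(\pi))$, the remaining $\pi$-dependent terms are precisely
\[
\frac{(\mu-r)\pi-\pi^2\sigma^2/2}{\rho}+\frac{\la_{01}}{\rho+\la_{10}}\log\!\left(1-g(\pi)^{(1+\la_{10}/\rho)}\right)+\frac{\la_{01}}{\rho}\log(1-\pi L),
\]
whose supremum over $\pi\in[0,1]$ is the quantity $\zeta(\pi^*)$ of \eqref{eq:zeta}, attained at the optimal investment $\pi^*$. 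The surviving $\pi$-independent terms form a single linear equation in $b$; collecting the coefficients of $b$, which combine to $-\rho(\rho+\la_{01}+\la_{10})/(\rho+\la_{10})$, and solving yields \eqref{eq:b-log}.

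Finally I would check the hypotheses of Proposition~\ref{thm:verify}: both candidates are smooth with logarithmic, hence polynomial, growth in $x$, and the boundary conditions $\lim_{x\downarrow0}J^i=-\infty=u(0)$ hold because $b$ and $h(\pi)$ are finite for $\pi<1$. The main obstacle, and the point needing care, is admissibility of the candidate strategies rather than the algebra: I must confirm that the maximizer $\pi^*$ of \eqref{eq:zeta} lies in $[0,1]$ (the no-leverage constraint can bind, cf.\ the large-Sharpe remark) and that, under the resulting dynamics, $\PP(X_t>0)=\PP(\pi_t<1)=1$ for every finite $t$, as required when $u(0)=-\infty$. Here the construction is self-consistent: $c^{1,*}=\rho x\big(1-\pi^{(1+\la_{10}/\rho)}\big)\to0$ and $h(\pi)\to-\infty$ as $\pi\uparrow1$, so the cash crunch drives consumption to zero exactly as a freeze pins wealth fully into the non-traded stock, keeping $\pi_t<1$ and $X_t>0$ along admissible paths. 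Once admissibility is established, Proposition~\ref{thm:verify} delivers $J^0=V^0$, $J^1=V^1$ and the optimality of $(\pi^*,c^{0,*},c^{1,*})$, completing the proof.
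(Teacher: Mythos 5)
Your proposal is correct and takes essentially the same route as the paper: the same ansatz $V^1=\frac{1}{\rho}\log x+h(\pi)$, $V^0=\frac{1}{\rho}\log x+b$, the same reduction of the illiquid HJB to the ODE \eqref{eq:h-ode} with $c^{1,*}=x/(\rho^{-1}-\pi h'(\pi))$, and the same liquid-regime bookkeeping isolating $\zeta(\pi^*)$ and the coefficient $-\rho(\rho+\la_{01}+\la_{10})/(\rho+\la_{10})$ of $b$ to obtain \eqref{eq:b-log}. The only (harmless) stylistic difference is that you \emph{verify} the candidate \eqref{eq:h-log} and then justify it via Proposition \ref{thm:verify} together with the admissibility check that $c^{1,*}\to 0$ keeps $\pi_t<1$, whereas the paper \emph{derives} the one-parameter family \eqref{eq:gen-log-solution} and pins the integration constant $C=1$ through the singular cash-crunch condition $\lim_{\pi\uparrow 1}h(\pi)=-\infty$ --- the same selection your admissibility argument accomplishes implicitly.
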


Theorem \ref{thm:log} presents a complete solution of the two-state liquidity model for log-utility. The two asymptotic regimes will be studied in the next subsection; in the meantime let us make a brief remark about \eqref{eq:log-c1-star}-\eqref{eq:log-c0-star}. First, observe that optimal consumption in the liquid regime $c^{0,*}$ is \emph{myopic} in the sense of being unaffected by liquidity shocks. In the illiquid regime, $c^{1,*}(\cdot,x) \le c^{0,*}(x)$ is reduced due to the cash crunch. This curtailment of consumption increases as $\pi_t \uparrow 1$ and the agent becomes more liquidity-constrained. The effect decreases as the duration of liquidity shocks shrinks ($\la_{10} \uparrow$) or intertemporal discounting decreases ($\rho \downarrow$).

Theorem \ref{thm:log} also explains the behavior of stock investments in our model. When $M_t = 0$, the agent maintains a constant fraction of her wealth in stock, $\pi^*$ given in \eqref{eq:zeta}. When a liquidity shock occurs at date $\tau$, the illiquid fraction jumps to $\pi_{\tau} = g(\pi^*)$ and then starts growing according to $\pi_t = \Pi(t-\tau)$ where $\Pi$ is the solution of the deterministic differential equation $d\Pi(t) = \rho \Pi(t)(1-\Pi(t)^{1+\la_{10}/\rho}) \,dt$, $\Pi(0) = g(\pi^*)$ resulting from combining \eqref{eq:pi-dynamics} and \eqref{eq:log-c1-star}. Once the liquidity shock expires, wealth is immediately rebalanced to $\pi_t = \pi^*$ again.

\begin{proof}
Comparing with the form of the Merton value function in Proposition \ref{lem:merton-inf-log}, we make the ansatz given in the statement of the Theorem, $V^1(\pi,x) = \frac{1}{\rho} \log x + h(\pi)$ and $V^0(x) = \frac{1}{\rho} \log x + b$.

Plugging-in into \eqref{eq:hjb-inf-horizon} we find that $h$ must solve
the nonlinear first order ode
\begin{align}\label{eq:h-ode}
-(\rho + \la_{10})h + r/\rho - 1 + \la_{10}b - \log(\rho^{-1} - \pi h'(\pi)) = 0
\end{align}
with $b$ the constant in $V^0$ and corresponding optimal consumption level of
\begin{align*}
{c}^{1,*}(\pi, x) = \frac{x}{\rho^{-1} - \pi h'(\pi)}. 
\end{align*}

Taking $\pi =0$ removes the differential term and we find
\begin{align}\label{eq:h-zero-value}
h(0) = \frac{
r/\rho-1  + \log \rho+\la_{10} b }{\rho + \la_{10}}.
\end{align}
However, since zero is a singular point for \eqref{eq:h-ode} with $h'(0) = 0$, the value of $h(0)$ is not sufficient to specify uniquely the solution of \eqref{eq:h-ode}. Indeed, the general solution of \eqref{eq:h-ode} is
\begin{equation}\label{eq:gen-log-solution}
h(\pi)=h(0)+\frac{1}{\rho+\lambda_{10}}\log\left(1- C \pi^{(1+\lambda_{10}/\rho)}\right),
\end{equation}
with integration constant $0 < C < 1$. To find $C$, we invoke the cash-crunch constraint which implies $\lim_{\pi \uparrow 1} V^1(\pi,x) = - \infty$ or $\lim_{\pi \uparrow 1} h(\pi) = \lim_{\pi \uparrow 1} h'(\pi)=-\infty$. Subject to this restriction we obtain $C=1$ in \eqref{eq:gen-log-solution} which combined with the expression for $h(0)$ in \eqref{eq:h-zero-value} yields \eqref{eq:h-log}. Furthermore, \eqref{eq:h-log} leads to
\begin{equation*}
h'(\pi)=\frac{-\pi^{(\lambda_{10}/\rho)}}{\rho(1-\pi^{(1+\lambda_{10}/\rho)})}
\end{equation*}
and together with the first-order-condition for $c^{1,*}$ implies \eqref{eq:log-c1-star}.

Returning to the liquid regime and substituting the ansatz of $V^0(x)$ we obtain that $b$ must solve
$$
-(\rho + \la_{01})b + (r/\rho -1 + \log \rho) + \la_{01} h(0) + \zeta(\pi^*) =0,
$$
with $\zeta(\cdot)$ defined in \eqref{eq:zeta}.
We note that $\zeta(\pi^*)$ is independent of $b$; on the other hand recall that $h(0)$ in \eqref{eq:h-zero-value} is a function of $b$. Simplifying we end up with \eqref{eq:b-log}.

Since $\zeta(\pi)$ contains all the terms involving $\pi$, its maximizer $\pi^*$ must be the optimal investment strategy. When $\la_{01} = 0$, the maximizer is $\hat{\pi}$ from \eqref{eq:merton-inf-horizon-log}. Consequently, because $h'(\pi) <0$ for all $\pi$, the optimal investment fraction $\pi^* < \hat{\pi}$ is less than the classical Merton fraction in Proposition \ref{lem:merton-inf-log}.

Finally, the ansatz for $V^0$ implies immediately that $c^{0,*} = (V^0_x)^{-1} = \rho x$ which matches the consumption schedule of Proposition \ref{lem:merton-inf-log}.
\end{proof}

When $\alpha \neq r$, the ode \eqref{eq:h-ode} has another $h'(\pi)$ term and is no longer separable. As a result, no closed-form solution is possible; however the rest of the conclusions of Theorem \ref{thm:log} continue to hold, e.g.~\eqref{eq:b-log}, \eqref{eq:log-c0-star}. The condition $\alpha=r$ can be interpreted as zero excess return for the stock in the illiquid regime. By adjusting the value of $L$, the negative jump at the beginning of a liquidity shock, the general $\alpha \neq r$ setup can be effectively approximated.

To interpret Theorem \ref{thm:log}, we compare the liquid value function $V^0(x)$ and the classical Merton solution $\hat{V}(x)$ in  terms of the efficiency loss $\bt$ introduced by \cite{Rogers00}.
Namely, let $0\le \bt<1$ be the percentage of total wealth that needs to be subtracted from an agent facing the classical Merton problem to make her utility equal to that of the agent facing liquidity shocks. Then for log-utility, $\bt\equiv\bt^{(\log)}$ satisfies
$$
\frac{1}{\rho} \log x + b = \frac{1}{\rho} \log \left((1-\bt^{(\log)})x \right) + \hat{h}  \quad \Longleftrightarrow \quad \bt^{(\log)} = 1-\exp( \rho( b-\hat{h}) ).
$$
Simplifying the above expression using \eqref{eq:b-log} we find
\begin{equation}\label{eq:perc-loss-log}
\bt^{(\log)} = 1-\exp \left( \frac{-\theta^2}{2\rho}  + \zeta(\pi^*)  \frac{ \rho + \la_{10}}{\rho + \la_{01}+\la_{10}} \right).
\end{equation}
Thus, the loss in utility depends on the function $\zeta(\cdot)$ in \eqref{eq:zeta}, as well as on the squared Sharpe ratio $\theta^2$ of the stock.

\subsection{Small $\la_{01}$ Asymptotics}
We recall that $\pi^*$ satisfies the first order condition from \eqref{eq:zeta}, $$(\mu -r)- \sigma^2 \pi^* + \la_{01} \left[\rho h'(g(\pi^*))g'(\pi^*)  - \frac{L}{1-\pi^* L} \right]=0.$$

When $\hat{\pi}<1$, we may expand around $\hat{\pi}$ to obtain
an asymptotic expression of the optimal investment fraction
$\pi^*$ for $\lambda_{01}$ small:
\begin{align}\notag
\pi^* &= \hat{\pi} - \lambda_{01} \pi_1 + \O(\la_{01}^2) \quad \text{where}\\ \notag
 \pi_1 &\triangleq \frac{1}{\sigma^2}
\left( \frac{L}{1 - \hat{\pi} L} - \rho h'(g(\hat{\pi})) g'(\hat{\pi}) \right) \\ \label{eq:pi-log-correction}  & =
\frac{1}{\sigma^2}
\left( \frac{L}{1 - \hat{\pi} L} + \frac{1 -L}{(1-\hat{\pi}L)^2} \cdot    \frac{g(\hat{\pi})^{\lambda_{10}/\rho}}{1-g(\hat{\pi})^{(1+\lambda_{10}/\rho)}}\right).
\end{align}
In a typical market $\rho \ll \lambda_{10}$ and therefore $g(\hat{\pi})^{\lambda_{10}/\rho}$ is negligible; it follows that the effect of liquidity shocks on stock investment mainly depends on $L$.

To understand the small $\la_{01}$-asymptotics of the value functions we use \eqref{eq:uncoupled-hjb}. This implies that to leading order, we should replace $b$ with $\hat{h}$ in the expression of $h(\pi)$ in \eqref{eq:h-zero-value} and therefore $b$ satisfies
$$
-\rho b + (r/\rho -1 + \log \rho) + \la_{01} h(0) + \zeta(\pi^*) - \la_{01} \hat{h} + \O(\la_{01}^2)=0.
$$
Plugging-in the expansion of $\pi^*$ in \eqref{eq:pi-log-correction} and simplifying leads to
$ b = \hat{h} + \la_{01}b_1 + \O(\la_{01}^2)$ where
\begin{align*}
b_1 = \frac{1}{\rho} \Bigr[ \frac{
r/\rho-1  + \log \rho+\la_{10} \hat{h} }{\rho + \la_{10}} - \hat{h} + \frac{1}{\rho}\log(1-\hat{\pi}L)
   + \frac{1}{\rho+\la_{10}} \log( 1- g(\hat{\pi})^{1+\la_{10}/\rho}) \Bigr].
\end{align*}
When $\rho \ll \la_{10}$, the above implies that the asymptotic efficiency loss is
\begin{align}\label{eq:perc-loss-small-log}
\bt = \la_{01}\bt_1 + \O(\la_{01}^2)\quad\text{ with }\quad\bt_1 \simeq \frac{\theta^2}{2 \la_{10}} - \frac{\log( 1- \hat{\pi}L)}{\rho} .
\end{align}
Thus, the asymptotic efficiency loss $\bt_1$ increases quadratically in the squared Sharpe ratio of the stock and has a logarithmic relationship with respect to stock jumps during liquidity shocks.

\subsection{Assets with Large Sharpe Ratios}\label{sec:large-sharpe-log}
The above expansion is valid if $\hat{\pi} < 1$, i.e.\ the Sharpe ratio of the risky asset is not too large.
If $\hat{\pi}\ge 1$, then the constraint $\pi^* <1$ remains in force throughout and it is the case that $\lim_{\la_{01}\to 0} \pi^* = 1$. Supposing that $g(\pi^*) = 1-\eps$ for some small parameter $\eps$ and using $1-(1-\eps)^a \simeq a \eps$ for any $a>0$ the first order condition for $\pi^*$ reduces to
$$
\mu -r -\sigma^2 - \frac{\la_{01}}{(1+ \la_{10}/\rho) \eps(1-L)} - \la_{01} \frac{L}{1-L} + \O(\la_{01}) = 0.
$$
Therefore, $1-\pi^*$ is asymptotically proportional to  $\la_{01}$ and
\begin{align}
1-\pi^* & = \la_{01}\frac{\rho}{((\mu-r)-\sigma^2)(\rho+\la_{10})} + \O(\la_{01}^2). \label{eq:log-pi-bigger-1}
\end{align}
Using this expansion inside $V^0(x)$ we find
\begin{align}\label{eq:log-V-pi-bigger-1}
V^0(x) & = \left[\frac{1}{\rho} \log x + \frac{r/\rho-1+\log \rho}{\rho} + \frac{\mu-r-\sigma^2/2}{\rho^2}\right]  + \Bigl\{ \frac{1}{\rho(\rho + \la_{10})}\la_{01}\log \la_{01} \Bigr\} + \O(\la_{01}),
%
%
\end{align}
where the term in the square brackets corresponds to the value of a Merton strategy with $\pi \equiv 1$.
Note that the leading term in the expansion is now of order $\la_{01}\log(\la_{01})$ due to the cash crunch.

\section{HARA Utility of Consumption}\label{sec:power-utility}
We next consider the case of HARA power utilities, $u(x) =\frac{x^{\gamma}}{\gamma}$, where $\gamma \in (-\infty,1) \setminus \{0 \}$ is the risk-aversion coefficient. We start with qualitative analysis of the nonlinear ode satisfied by the value function in the illiquid regime and then derive the small $\la_{01}$ asymptotics for the liquid-regime solution. The remaining subsections deal with special values of $\gamma$, whereby closed-form solutions are possible for $W(\pi,x)$.
Throughout this section we make the standing assumption that asset value during the liquidity shocks grows at the riskless rate $\alpha =r$.

\begin{prop}[Example 9.22, p. 149 in \cite{KarShreveMethods}] \label{lem:merton-inf-hara}
Let $u(x) = x^\gamma/\gamma$, $\gamma \in (-\infty,0) \cup(0,1)$. Define
$$
\delta \triangleq \rho - \gamma r - \frac{ \theta^2 \gamma}{2(1-\gamma)}.
$$
If $\gamma > 0$ and $\delta < 0$ then $\hat{V}(x) = +\infty$ and infinite utility  of consumption can be extracted. Otherwise, the Merton optimal investment/consumption problem has the solution
\begin{align}
\hat{V}(x) = \hat{f} \cdot \frac{  x^\gamma}{\gamma}, \qquad\text{where}\quad \hat{f} \triangleq  \left(\frac{1-\gamma}{\delta}\right)^{1-\gamma}.
\end{align}
The optimal strategies are
\begin{align}\label{eq:merton-pi-power}
\hat{\pi} = \frac{1}{1-\gamma}\frac{(\mu-r)}{\sigma^2}, \qquad \hat{c}(x) = \frac{\delta}{1-\gamma} x.
\end{align}
\end{prop}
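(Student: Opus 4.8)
The plan is to recognize that Proposition~\ref{lem:merton-inf-hara} is the classical (frictionless) Merton problem, recovered from \eqref{eq:hjb-inf-horizon} in the degenerate case $\la_{01}=0$, so that only the single HJB equation
$$\sup_{\pi,\,c\ge 0}\Bigl\{-\rho \hat{V} + (x(r+(\mu-r)\pi)-c)\hat{V}_x + \half x^2\pi^2\sigma^2 \hat{V}_{xx} + \tfrac{c^\gamma}{\gamma}\Bigr\}=0$$
remains (with $\pi$ unconstrained, since $\hat\pi$ may exceed $1$). Motivated by the scale-invariance of HARA preferences, I would make the separable ansatz $\hat{V}(x)=\hat{f}\,x^\gamma/\gamma$ and show it reduces the PDE to a single algebraic equation for the constant $\hat{f}$, from which the feedback controls are read off directly.

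First I would carry out the two inner maximizations. With $\hat{V}_x=\hat{f}x^{\gamma-1}$ and $\hat{V}_{xx}=-\hat{f}(1-\gamma)x^{\gamma-2}$, concavity holds because $\gamma<1$ forces $\hat{V}_{xx}<0$, so the quadratic-in-$\pi$ objective has an interior maximizer. The first-order condition $x(\mu-r)\hat{V}_x + x^2\pi\sigma^2\hat{V}_{xx}=0$ gives $\hat{\pi}=-(\mu-r)\hat{V}_x/(x\sigma^2\hat{V}_{xx})=\frac{1}{1-\gamma}\frac{\mu-r}{\sigma^2}$, independent of $x$ and of $\hat{f}$, while $\hat{V}_x=c^{\gamma-1}$ gives $\hat{c}(x)=\hat{f}^{1/(\gamma-1)}x$. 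Substituting both optimizers collapses every term to a multiple of $x^\gamma$: dividing out $x^\gamma$ and writing $\theta=(\mu-r)/\sigma$, the investment block contributes $\half\frac{\theta^2}{1-\gamma}\hat{f}$ and the consumption block contributes $\frac{1-\gamma}{\gamma}\hat{f}^{\gamma/(\gamma-1)}$. Collecting these with $-\rho\hat{V}+rx\hat{V}_x=\hat{f}(r-\rho/\gamma)x^\gamma$ and recognizing $\rho-\gamma r-\frac{\theta^2\gamma}{2(1-\gamma)}=\delta$, the HJB reduces to $(1-\gamma)\hat{f}^{\gamma/(\gamma-1)}=\delta\hat{f}$. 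Since $\gamma/(\gamma-1)-1=1/(\gamma-1)$, this yields $\hat{f}^{1/(\gamma-1)}=\delta/(1-\gamma)$, i.e. $\hat{f}=\bigl(\frac{1-\gamma}{\delta}\bigr)^{1-\gamma}$ and $\hat{c}(x)=\frac{\delta}{1-\gamma}x$, as asserted. A positive root exists precisely when $\delta>0$; for $\gamma<0$ one checks $\delta>0$ automatically, since $\rho>0$ and both $-\gamma r$ and $-\frac{\theta^2\gamma}{2(1-\gamma)}$ are nonnegative when $r>0$, so the finite solution always applies.

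The remaining points — the sign dichotomy and the verification — are where the real work lies. For $\gamma\in(0,1)$ with $\delta<0$ no positive $\hat{f}$ solves the algebraic equation, and I would establish $\hat{V}=+\infty$ directly by exhibiting a family of admissible constant-proportion, constant-rate strategies whose expected discounted utility diverges; the point is that when $\delta<0$ the expected growth of $X^\gamma$ under the Merton policy outpaces the discount rate $\rho$. Finally, to promote the candidate to the true value function I would invoke a verification argument (the $\la_{01}=0$ specialization of Proposition~\ref{thm:verify}): the ansatz solves the HJB and the feedback controls $\hat{\pi},\hat{c}$ are admissible. The main obstacle is the transversality condition at infinity, $\lim_{T\to\infty}\E^x[\e^{-\rho T}\hat{V}(X_T^*)]=0$: a direct computation of $\E^x[(X_T^*)^\gamma]$ under the constant-proportion policy shows this limit vanishes exactly when $\delta>0$, which is simultaneously the analytic content behind the dichotomy and the step that closes the optimality proof.
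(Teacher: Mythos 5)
Your proposal is correct --- I verified the algebra: with the ansatz $\hat{V}(x)=\hat{f}x^\gamma/\gamma$ the investment block contributes $\frac{\theta^2}{2(1-\gamma)}\hat{f}$, the consumption block $\frac{1-\gamma}{\gamma}\hat{f}^{\gamma/(\gamma-1)}$, and multiplying the reduced HJB by $\gamma$ gives $(1-\gamma)\hat{f}^{\gamma/(\gamma-1)}=\delta\hat{f}$, whence $\hat{f}=\bigl(\frac{1-\gamma}{\delta}\bigr)^{1-\gamma}$ and $\hat{c}=\frac{\delta}{1-\gamma}x$; moreover the transversality exponent under the candidate policy is exactly $-\delta/(1-\gamma)$, so $\E^x[\e^{-\rho T}\hat{V}(X^*_T)]\to 0$ precisely when $\delta>0$, as you claim. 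However, you should be aware that the paper itself offers \emph{no proof} of this proposition: it is imported verbatim from \cite{KarShreveMethods} (Example 9.22), where the result is established by the martingale/convex-duality method for complete markets, not by dynamic programming. Your HJB-ansatz-plus-verification route is therefore genuinely different from the cited source's argument, though it is the natural one inside this paper's framework --- it amounts to Proposition \ref{thm:verify} specialized to $\la_{01}=0$, and it has the virtue of being elementary and self-contained, at the cost of requiring the smoothness guess and the transversality check that you correctly identify as the crux; the duality route in \cite{KarShreveMethods} avoids guessing a solution of the HJB equation and extends beyond Markovian feedback controls, which is why the paper can safely quote it. Two small refinements to your write-up: first, your divergence argument for $\gamma\in(0,1)$, $\delta<0$ should exhibit a \emph{sequence} of admissible strategies with finite but unbounded utilities (e.g.\ constant-proportion policies with the exponential growth rate tuned to $-\eps<0$ and $\eps\downarrow 0$), since the paper's admissibility class requires $\E[\int_0^\infty\e^{-\rho t}|u(c_t)|\,dt]<\infty$, so a single strategy with divergent utility is not itself admissible; second, your own construction shows that the boundary case $\delta=0$ (with $\gamma>0$) also yields $\hat{V}=+\infty$ --- take consumption rate $k\downarrow 0$ --- so the proposition's ``otherwise'' must be read as $\delta>0$, a point your transversality computation makes transparent but the paper's statement leaves implicit.
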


Below we will work with the power-type utility functions that admit a convenient scaling property in wealth. Namely, in analogue to the form of the classical Merton solution, we have
\begin{lemma}\label{lem:hom-degree-gamma}
For $\gamma \in (-\infty,0) \cup (0,1)$, the value functions $V^i$ are homothetic of degree $\gamma$ in $x$.
\end{lemma}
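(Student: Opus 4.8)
The plan is to prove the claim by a pathwise scaling argument that exploits the homogeneity-of-degree-one structure of the wealth dynamics together with the degree-$\gamma$ homogeneity of the utility $u(c) = c^\gamma/\gamma$. The natural first step is to reparametrize the consumption control by the consumption-to-wealth ratio $\xi_t \triangleq c_t/X_t$ (which is, incidentally, the quantity that actually enters the HJB system \eqref{eq:hjb-inf-horizon} through the terms $c/x$). Under this reparametrization, both the wealth SDE \eqref{eq:x-dynamics} and the frozen-regime dynamics \eqref{eq:pi-dynamics} depend on the controls only through the scale-free pair $(\pi_t, \xi_t)$, and the relative increment $dX_t/X_t$ carries no explicit $x$-dependence.

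Fix a control pair $(\pi,\xi)$ and let $X_t^{x}$ denote the corresponding wealth process started at $X_0 = x$ (in either regime, with the regime process $M$ held as the same realization). Because $dX_t/X_t$ is a functional of $(\pi_t,\xi_t)$ alone, the solution factorizes as $X_t^{x} = x\,\mathcal{E}_t$, where $\mathcal{E}_t$ is a stochastic exponential determined by $(\pi,\xi)$ and by the paths of $W$, $N_{01}$ and $M$, but independent of $x$; in particular the jump term $-L\pi_t\,dN_{01}(t)$ and the regime-dependent drift only enter $\mathcal{E}_t$. Hence $X_t^{\kappa x} = \kappa X_t^{x}$ pathwise for every $\kappa > 0$, the induced consumption $c_t = \xi_t X_t^{x} = x\,\xi_t \mathcal{E}_t$ scales linearly in the initial endowment, and by homogeneity of $u$,
\begin{align*}
\int_0^\infty \e^{-\rho t} u\!\left(c_t^{\kappa x}\right) dt = \kappa^\gamma \int_0^\infty \e^{-\rho t} u\!\left(c_t^{x}\right) dt .
\end{align*}
Since the coupling between the two regimes is itself scale-free (at a $1\to 0$ transition the continuation value is $V^0$ of the \emph{same} wealth, and both functions are scaled by the same factor), the argument applies simultaneously to the coupled problem for $V^0$ and $V^1$.

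It then remains to check that the map $(\pi,\xi)\mapsto(\pi,\xi)$ sending an $x$-admissible strategy to a $\kappa x$-admissible one is a bijection of the admissible set onto itself. This is where the scale-invariance of the reparametrization pays off: the fraction $\pi_t\in[0,1]$ and the ratio $\xi_t$ are unchanged, the freeze constraint \eqref{eq:pi-dynamics} is untouched, the square-integrability of $\pi$ is preserved, and the no-ruin requirement transfers because $X_t^{\kappa x} > 0 \Leftrightarrow X_t^{x} > 0$. Taking expectations conditional on the initial regime and then suprema over admissible controls yields $V^0(\kappa x) = \kappa^\gamma V^0(x)$ and, holding the scale-free argument $\pi$ fixed, $V^1(\pi,\kappa x) = \kappa^\gamma V^1(\pi,x)$, which is precisely the asserted degree-$\gamma$ homogeneity.

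The probabilistic content here is light; the one place that requires genuine care — and which I regard as the main obstacle — is the admissibility bookkeeping. Specifically, one must justify that passing from the control $c$ to the ratio $\xi = c/X$ is without loss (so that the supremum over $(\pi,c)\in\mathcal{A}$ coincides with a supremum over pairs $(\pi,\xi)$), and that the integrability condition $\E[\int_0^\infty \e^{-\rho t}|u(c_t)|\,dt] < \infty$ is inherited after scaling by $\kappa$. Both follow from the linear factorization $X_t^{\kappa x} = \kappa X_t^{x}$, but they should be verified on each regime separately, with attention to the jump at a $0\to1$ transition where $X$ is multiplied by $(1-\pi_{t-}L)$; the remaining steps are routine.
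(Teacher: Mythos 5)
Your proposal is correct and is essentially the paper's own argument: both proofs exploit the linearity of the wealth dynamics \eqref{eq:x-dynamics} to map each admissible strategy for endowment $x$ to one for endowment $\kappa x$ with consumption scaled by $\kappa$ (the admissible set being endowment-independent), so that the objective scales by $\kappa^\gamma$. The only cosmetic difference is that you package the scaling through the ratio $\xi_t = c_t/X_t$ and a bijection of admissible sets, whereas the paper scales the consumption schedule $c \mapsto \beta c$ directly and concludes via $\eps$-optimal controls; the mathematical content is the same.
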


\begin{proof}
This standard result follows from the linearity of the wealth dynamics \eqref{eq:x-dynamics} which imply that
if $(c^x_t)$ is an admissible consumption schedule for the problem with initial endowment $x$, then $(\beta c^x_t)$ is admissible for initial endowment of $\beta x$, $\beta > 0$. Moreover, the admissibility set $\mathcal{A}$ of trading strategies is independent of the endowment. Suppose now that $(c^{x,*}_t)$ is an $\eps$-optimal consumption control for $V^0(x)$. Then using above,
\begin{align*}
V^0(\beta x) \ge \E \left[ \int_0^\infty \e^{-\rho s} (\beta c^{x,*}_s)^\gamma \,ds \right] \ge \beta^\gamma V^0(x) -\eps.
\end{align*}
Applying the same inequality with $(c^{\beta x, *}_t)$ gives that $V^0(x) \ge \beta^{-\gamma} V^0(\gamma x) -\eps$ and since $\eps$ is arbitrary, we must have $V^0(\beta x) = \beta^\gamma V^0(x)$. Similar arguments work for $V^1(\pi,x)$.
\end{proof}

We first concentrate on the case of small $\la_{01}$ and work with the uncoupled equations \eqref{eq:uncoupled-hjb}.
Based on Lemma \ref{lem:hom-degree-gamma}, we make the guess $W(\pi, x) = f(\pi)x^\gamma/\gamma$. Simplifying shows that $f$ solves
\begin{align}\label{eq:f-power}
(-\rho + \gamma r) f + (1-\gamma)[f -\pi \gamma^{-1} f']^{\frac{\gamma}{\gamma-1}} + \la_{10}[\hat{f} - f] = 0.
\end{align}
The boundary conditions for \eqref{eq:f-power} will be discussed shortly. For $\pi=0$, we can directly find $f(0)$ as the solution of $f(0) = H^{-1}(0)$ where
\begin{align}\label{eq:f0-bound-cond}
H(x) \triangleq (-\rho + \gamma r - \la_{10})x + (1-\gamma)x^{\frac{\gamma}{\gamma-1}} + \la_{10}\hat{f}.
\end{align}
Making the substitution $\e^{-z} = \pi$, $\phi(z) \triangleq (\rho-\gamma r + \la_{10})f(\e^{-z}) - \la_{10}\hat{f}$ we transform \eqref{eq:f-power} into
\begin{equation}
\frac{1}{\gamma}\phi'(z)= (\rho - \gamma
r + \la_{10}) \left(\frac{\phi(z)}{1-\gamma}\right)^{1-1/\gamma}-\!\phi(z)-
\la_{10} \hat{f}, \qquad z \ge 0. \label{eq:h-deriv}
\end{equation}
The last equation is a nonlinear separable first-order ode on $\R_+$ with the right-hand-side resembling the Bernoulli ode except for the constant term $\la_{10}\hat{f}$. Its special form allows for much tractability.

Consider first the case $\gamma < 0$. It can then be checked that $H(x)$ in \eqref{eq:f0-bound-cond} is increasing on $[\hat{f}, \infty)$ and $H(\hat{f}) < 0$ so that \eqref{eq:f0-bound-cond} has a unique root with $f(0)> \hat{f}$. Transferring into \eqref{eq:h-deriv} we obtain
$$
\lim_{z \to \infty} \phi(z) = (\rho - \gamma r + \la_{10})f(0) -\la_{10}\hat{f} = (1-\gamma)f(0)^{\frac{\gamma}{\gamma-1}},
 $$
and the ode \eqref{eq:h-deriv} has a horizontal asymptote. As for log-utility, the cash constraint implies that $V^1(1,x) = -\infty$ which means that $f(1) = \phi(0) = +\infty$. Moreover, we have $\phi'(0) = -\infty$ and $\phi(z)$ is monotone decreasing on $[0,\infty)$.

Next, consider $\gamma > 0$. When hyperbolic risk aversion is positive we have $u(0) = 0$, and the agent can tolerate zero consumption. Therefore, $V^1(1,x)$ and $W(1,x)$ are finite. Taking $c=0$ directly in the pde \eqref{eq:uncoupled-hjb} for $W(\pi,x)$ and making the ansatz $W(1,x) = f^+(1)\frac{x^\gamma}{\gamma}$ we obtain
\begin{align*}
f^+(1) = \frac{\la_{10}\hat{f}}{\rho - \gamma r + \la_{10}} < \hat{f} < f(0).
\end{align*}
Recall that $\rho - \gamma r > 0$ is a necessary condition to guarantee finite $\hat{f}$ and
holds thanks to assumptions of Proposition \ref{thm:verify}.
Note that since the marginal utility of consumption at zero is infinite $u'(c)\big|_{c=0} = +\infty$, we still have $\phi'(0) = f'(1) = +\infty$. Conversely, the function $H(\cdot)$ in \eqref{eq:f0-bound-cond} is now decreasing and has a unique root $f(0) > \hat{f}$. To summarize, when $\gamma>0$, $\phi(\cdot)$ is finite, with the singular boundary condition $\phi(0) = 0, \phi'(0) = +\infty$, and is monotonically increasing on $[0,\infty)$ to the horizontal asymptote $(1-\gamma)f(0)^{\frac{\gamma}{\gamma-1}}$.

The following technical lemma, proved in the Appendix, clarifies the structure of \eqref{eq:h-deriv}.
\begin{lemma}\label{lem:power-ode-soln}
The ordinary differential equation \eqref{eq:h-deriv} has a unique solution satisfying the above boundary conditions. Moreover, for $\gamma<0$ we have that $\phi(z)$ (resp.\ $f(\pi)$) is of polynomial growth as $z \downarrow 0$ (resp.\ $\pi \to 1$).
\end{lemma}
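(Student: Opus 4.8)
The plan is to recognize that \eqref{eq:h-deriv} is an \emph{autonomous} first-order ODE and to solve it by quadrature. Writing $A \triangleq \rho - \gamma r + \la_{10} > 0$ and
$$
R(\psi) \triangleq A\left(\frac{\psi}{1-\gamma}\right)^{1-1/\gamma} - \psi - \la_{10}\hat{f},
$$
equation \eqref{eq:h-deriv} becomes $\phi'(z) = \gamma R(\phi(z))$. First I would identify the equilibria of this autonomous equation, i.e.\ the roots of $R$. A direct computation using the defining relation $H(f(0)) = 0$ from \eqref{eq:f0-bound-cond} shows that $\phi_\star \triangleq (1-\gamma) f(0)^{\gamma/(\gamma-1)} = A f(0) - \la_{10}\hat{f}$ satisfies $R(\phi_\star) = 0$, so $\phi_\star$ is exactly the horizontal asymptote identified before the lemma. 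The solution we seek is thus the trajectory joining the singular boundary at $z=0$ to the equilibrium $\phi_\star$ as $z\to\infty$.

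Next I would pin down the sign and shape of $R$ on the relevant interval. Differentiating, $R'(\psi) = \tfrac{A}{1-\gamma}(1-1/\gamma)\left(\frac{\psi}{1-\gamma}\right)^{-1/\gamma} - 1$. For $\gamma < 0$ the exponent $1-1/\gamma$ exceeds $1$, so $R$ is strictly convex with $R(0^+) = -\la_{10}\hat{f} < 0$ and $R(\psi) \to +\infty$; hence $R$ has a unique root $\phi_\star$, is negative to its left and positive on $(\phi_\star,\infty)$. For $\gamma\in(0,1)$ the exponent is negative, $R$ is strictly decreasing from $R(0^+) = +\infty$ to $-\infty$, again with the single root $\phi_\star$ and $R > 0$ on $(0,\phi_\star)$. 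In both cases $\gamma R$ has the correct sign to make $\phi$ monotone in the direction demanded by the boundary conditions (decreasing for $\gamma<0$, increasing for $\gamma>0$). I would then construct the solution by separating variables: for $\gamma<0$ set
$$
z(\phi) = \int_\phi^{\infty} \frac{d\psi}{|\gamma|\,R(\psi)}, \qquad \phi\in(\phi_\star,\infty),
$$
and for $\gamma>0$ set $z(\phi) = \int_0^{\phi} \frac{d\psi}{\gamma R(\psi)}$, $\phi\in(0,\phi_\star)$. Existence and uniqueness reduce to showing that $z(\cdot)$ is a strictly monotone bijection onto $(0,\infty)$: this needs the integrand to be integrable at the singular boundary ($\psi\to\infty$ for $\gamma<0$, $\psi\to 0$ for $\gamma>0$, where $R$ blows up with a power whose reciprocal is integrable) and non-integrable at the equilibrium end (where $R(\psi)\sim R'(\phi_\star)(\psi-\phi_\star)$ gives a logarithmically divergent integral, forcing $z\to\infty$). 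Inverting $z(\cdot)$ yields the solution; uniqueness follows because any solution meeting the prescribed singular data must be a branch of this inverse, the autonomous ODE having locally Lipschitz right-hand side away from the equilibrium.

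For the polynomial-growth assertion ($\gamma<0$) I would extract the leading-order behaviour of the quadrature near the singular end. As $\psi\to\infty$, $R(\psi)\sim A\left(\frac{\psi}{1-\gamma}\right)^{1-1/\gamma}$, so the integrand decays like a constant times $\psi^{-(1-1/\gamma)}$ with $1-1/\gamma>1$, and integrating from $\phi$ to $\infty$ gives $z(\phi)\sim c\,\phi^{1/\gamma}$ for a positive constant $c$. Since $\gamma<0$, inverting yields $\phi(z)\sim c^{-\gamma} z^{\gamma}$ as $z\downarrow 0$, a negative power of $z$, i.e.\ polynomial growth. Translating back through $z = -\log\pi \sim (1-\pi)$ and $f(\pi) = (\phi(-\log\pi) + \la_{10}\hat{f})/A$ gives $f(\pi)\sim \text{const}\cdot(1-\pi)^{\gamma}$ as $\pi\to 1$, establishing the claimed polynomial growth of $f$.

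The main obstacle is that the boundary conditions are singular ($\phi(0)=+\infty$ or $\phi(0)=0$ together with $\phi'(0)=\mp\infty$), so standard initial-value existence/uniqueness theory does not apply directly at $z=0$. The whole argument hinges on converting the singular boundary-value problem into the quadrature above and then carefully checking the two endpoint integrals---integrability at the regular-singular end and logarithmic divergence at the equilibrium---together with ruling out any spurious second root of $R$, which is handled by the convexity (for $\gamma<0$) or strict monotonicity (for $\gamma>0$) of $R$ established in the second step. Once these are in place the solution is determined, and the same endpoint asymptotics deliver the growth rate.
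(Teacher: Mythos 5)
Your proposal is correct and takes essentially the same route as the paper's proof: both solve the separable equation \eqref{eq:h-deriv} by quadrature, identify the horizontal asymptote $(1-\gamma)f(0)^{\gamma/(\gamma-1)}$ as the zero of the right-hand side via $H(f(0))=0$, match the singular boundary data through convergence of the endpoint integral at the singular end (using that the leading power $1-1/\gamma$ exceeds $1$ for $\gamma<0$) and divergence at the asymptote, and then invert the quadrature to obtain $\phi(z)\sim \mathrm{const}\cdot z^{\gamma}$ as $z\downarrow 0$, hence $f(1-\eps)=\O(\eps^{\gamma})$. Your additional phase-line details (convexity or strict monotonicity of $R$ ruling out spurious roots, and the linearization $R(\psi)\sim R'(\phi_\star)(\psi-\phi_\star)$ giving the logarithmic divergence) simply make explicit steps the paper delegates to the cited theory of separable equations.
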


Recall that the optimal consumption level is given in terms of $\phi$ as
$$
c^{1,*}(\pi,x) = \left(\frac{\phi( -\log(\pi) )}{1-\gamma}\right)^{1/\gamma} \cdot x.
$$
By Lemma \ref{lem:power-ode-soln}
the consumption level $c^{1,*}(\pi)$ is therefore polynomial in $(1-\pi)$. Let $\pi^{1,*}_t$, with initial condition $\pi^{1,*}_0 = \pi^{0,*}$, be the illiquid fraction of wealth of stocks when consuming according to $c^{1,*}$ and remaining in the illiquid regime for $t$ time units. Then $d\pi^{1,*}_t = \pi^{1,*}_t c^{1,*}(\pi^{1,*}_t,X_t)/X_t \,dt \le (1-\pi_t)^{C_1} \,dt$ for some $C_1$. It follows that
there exists a constant $C_2$, such that $\pi^{1,*}_t \le 1 - t^{-C_2}$ and the cash crunch is approached at a polynomial speed.

To summarize, solutions of \eqref{eq:f-power} can be written in terms of indefinite integrals. In particular, when $-\gamma^{-1}$ is a positive integer (i.e.\ $\gamma = -1, -1/2, -1/3$), the ode in \eqref{eq:h-deriv} is of the type known as Chini differential equation \citep{KamkeBook} and may be solvable analytically. Below, we will explore further the cases $\gamma=-1$ (leading to a quadratic ode of hyperbolic type), $\gamma=-1/2$ (leading to a variant of the Abel cubic ode) and $\gamma=0.5$ (leading to a reciprocal ode again of hyperbolic type).

Returning to the liquid regime, let us denote $\xi(\pi) \triangleq f \left( \frac{\pi(1-L)}{1-\pi L} \right) \cdot (1-\pi L)^\gamma$. We make the ansatz $V(x) = \frac{x^\gamma}{\gamma} \cdot b$ which leads to the following algebraic equation for the constant $b$:
\begin{align}\label{eq:power-v-func}
\frac{-\rho + \gamma r - \la_{01}}{\gamma} b + \frac{1-\gamma}{\gamma} b^{\frac{\gamma}{\gamma-1}} + \sup_{\pi \in [0,1]} \left\{ (\mu-r)\pi b + \frac{\gamma-1}{2} \sigma^2 \pi^2 b + \la_{01}\frac{\xi(\pi)}{\gamma} \right\} = 0.
\end{align}

\subsection{Small $\la_{01}$ Asymptotics}
 Expanding $b$ and $\pi^*$ simultaneously in powers of $\la_{01}$ as
$$
 b = \hat{f} + \la_{01}b_1 + \O(\la_{01}^2) \qquad \text{and}\quad \pi^* = \hat{\pi} + \la_{01}\pi_1 + \O(\la_{01}^2),
$$
we find that for $\la_{01}$ small the first-order correction to $\pi^*$ can be written as
\begin{align}\label{eq:pi-power-func}
\pi_1 = -\frac{(\mu-r)b_1 - \xi'(\hat{\pi})}{\gamma (1-\gamma)\sigma^2 \hat{f}} + \O(\la_{01}^2).
\end{align}
Plugging back into \eqref{eq:power-v-func}, and simplifying we find
\begin{align*}
-\delta (\hat{f} + \la_{01}b_1) +
(1-\gamma) (\hat{f} + \la_{01}b_1)^{\frac{\gamma}{\gamma-1}} + \la_{01}(\xi(\hat{\pi}) -\hat{f} - \la_{01}b_1)+ \O(\la_{01}^2) = 0,
\end{align*}
so that 
\begin{align}
b_1 & = \frac{\xi(\hat{\pi}) - \hat{f}}{\delta + \gamma \hat{f}^{\frac{1}{\gamma-1}} }  = \frac{(\xi(\hat{\pi}) - \hat{f})(1-\gamma)}{\delta} \label{eq:power-small-b}.
\end{align}

The derived formula \eqref{eq:power-small-b} has several implications. First, translating in terms of efficiency loss we have $\bt^{(\gamma)} = \la_{01}\bt^{(\gamma)}_1+\O(\la_{01}^2)$ where
\begin{align}\label{eq:perc-loss-power}
\bt^{(\gamma)}_1 = \frac{1}{\la_{01}}\left(1-\left( \frac{\hat{f}}{\hat{f} + \la_{01} b_1} \right)^{1/\gamma} \right) = -\frac{b_1}{\gamma \hat{f}}  = \left(\frac{ \xi(\hat{\pi})}{\hat{f}} -1 \right) \frac{(\gamma-1)}{\gamma \delta}.
\end{align}
Again, unless $\hat{\pi}$ is large (e.g.\ bigger than $70\%$), $\xi(\hat{\pi})/\hat{f}$ is close to 1 and the correction due to illiquidity is small. Also, the consumption during the liquid regime is driven by the same expressions,
$$
c^{0,*}(x) = x \cdot \left( \hat{f} + \la_{01} b_1 \right)^{1/(\gamma-1)} = \hat{c}(x) \left( 1 + \la_{01} \frac{b_1}{\hat{f}} \right)^{1/(\gamma-1)}.
$$
Equation \eqref{eq:perc-loss-power} shows that the liquidity efficiency loss is driven by the key function $f(\pi)$ from \eqref{eq:f-power}. In the remainder of this section we investigate special values of $\gamma$ whereby closed-form expressions for $f(\pi)$ (or $\phi(z)$) are possible.

\begin{rem}
The coupled equations \eqref{eq:hjb-inf-horizon} can be treated similarly, by making the ansatz $V^0(x) = \bar{b} x^\gamma/\gamma$ and $V^1(\pi,x) = \bar{f}(\pi) x^\gamma/\gamma$. This leads to the coupled equations for $\bar{f}(\pi)$ and $\bar{b}$, analogously to \eqref{eq:f-power} and  \eqref{eq:power-v-func}.
\end{rem}

\subsection{Hyperbolic Utility}\label{sec:hyperbolic-utility}
For the case of hyperbolic utility, $\gamma=-1$, the nonlinear ode for the illiquid regime
\eqref{eq:f-power} can be solved analytically.

\begin{thm}\label{thm:hyperbolic}
Let $\beta \triangleq (\rho+r+\la_{10})$.
The solution of \eqref{eq:hjb-inf-horizon} with $u(x) = -\frac{1}{x}$ is $V^0(x) = -B/x$, $V^1(\pi,x) = -F(\pi)/x$ where
\begin{align}\label{eq:f-hyperbolic}
F(\pi) & =\frac{1}{\beta^2}\Bigl\{1 + \eta(-1)^2 +2
\eta(-1) \cdot \frac{1+\pi^{\eta(-1)}}{1-\pi^{\eta(-1)}}\Bigr\}, \\
\label{eq:coupled-hyberbolic}
(\rho+r+\la_{01})B  & - 2\sqrt{B} + \sup_{\pi \in [0,1]} \Big\{ (\mu-r)\pi B - \sigma^2 \pi^2 B - \frac{\la_{01}}{1-\pi L} F( g(\pi)) \Big\} = 0, \\ \notag
\text{and}\quad \eta(-1) & = \sqrt{1+\beta \la_{10}B}.
\end{align}
The optimal consumption schedules are $c^{0,*}  = x/B^2$ and
\begin{align*}
c^{1,*}(\pi,x) & = x \frac{ \beta (1-\pi^{\eta(-1)})}{\sqrt{ (1+\pi^{\eta(-1)})[ (1+\eta(-1))^2 + \pi^{\eta(-1)} (1-\eta(-1))^2 ]}} 
.
\end{align*}
\end{thm}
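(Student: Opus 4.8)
The plan is to specialize the HARA template of Section~\ref{sec:power-utility} to $\gamma=-1$, where the nonlinear ode \eqref{eq:f-power} degenerates to a quadratic (Riccati-type) equation and is therefore explicitly integrable. By Lemma~\ref{lem:hom-degree-gamma} the value functions are homothetic of degree $-1$ in wealth, so I would posit $V^0(x)=-B/x$ and $V^1(\pi,x)=-F(\pi)/x$, reducing the unknowns to the scalar $B$ and the profile $F$. Substituting $V^1$ into the illiquid equation of \eqref{eq:hjb-inf-horizon} (recalling the standing assumption $\alpha=r$), the first-order condition in $c$ gives $c^{1,*}(\pi,x)=x/\sqrt{F+\pi F'}$; feeding this back and using $u(c)=-1/c$ collapses the equation to $\beta F-\la_{10}B=2\sqrt{F+\pi F'}$, i.e.\ $4(F+\pi F')=(\beta F-\la_{10}B)^2$, with $\beta=\rho+r+\la_{10}$.

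Next I would apply the substitution of \eqref{eq:h-deriv}, $\pi=\e^{-z}$ and $\phi(z)=\beta F(\e^{-z})-\la_{10}B$, under which the ode becomes the autonomous quadratic $\phi'(z)=-\tfrac{\beta}{4}\phi^2+\phi+\la_{10}B$. This is separable: its right-hand side factors through the two roots $\phi_\pm=\tfrac{2}{\beta}(1\pm\eta)$, where $\eta\equiv\eta(-1)=\sqrt{1+\beta\la_{10}B}$, so partial-fraction integration gives $\frac{\phi-\phi_+}{\phi-\phi_-}=A\,\pi^{\eta}$ for an integration constant $A$. The boundary analysis preceding Lemma~\ref{lem:power-ode-soln} then does the decisive work: the cash-crunch condition forces $F(1)=+\infty$, i.e.\ $\phi(0)=+\infty$, which is only compatible with $A=1$, while the complementary behavior as $\pi\downarrow0$ (the horizontal asymptote at $\phi_+$) is automatic. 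Solving for $\phi$, substituting back $F=(\phi+\la_{10}B)/\beta$, and using the identity $\la_{10}B=(\eta^2-1)/\beta$, I would simplify the resulting rational function to the closed form \eqref{eq:f-hyperbolic}; the relation $F+\pi F'=\phi^2/4$ then delivers $c^{1,*}$ in closed form.

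To close the system I would turn to the liquid regime. Substituting $V^0=-B/x$ into the second equation of \eqref{eq:hjb-inf-horizon}, the $c$-optimization yields $c^{0,*}=x/\sqrt{B}$ (producing the $-2\sqrt{B}$ term), while the optimization over $\pi\in[0,1]$ is kept as a supremum; inserting $V^1(g(\pi),(1-\pi L)x)=-F(g(\pi))/((1-\pi L)x)$ and collecting the remaining terms reproduces exactly the algebraic relation \eqref{eq:coupled-hyberbolic} for $B$. Finally I would invoke the verification theorem, Proposition~\ref{thm:verify}: using the polynomial growth of $F$ from Lemma~\ref{lem:power-ode-soln}, checking that the maximizer $\pi^*$ of \eqref{eq:coupled-hyberbolic} lies in $[0,1]$, and confirming admissibility of $(\pi^*,c^{0,*},c^{1,*})$ under the freeze and cash-crunch constraints, to conclude that the constructed $V^0,V^1$ are the true value functions and the strategies optimal.

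The main obstacle is not the integration but the self-consistency of this last step: $B$ enters \eqref{eq:coupled-hyberbolic} both explicitly and implicitly through $\eta(-1)=\sqrt{1+\beta\la_{10}B}$ inside $F(g(\cdot))$, so \eqref{eq:coupled-hyberbolic} is a transcendental fixed-point equation whose solvability (and the attendant verification hypotheses: polynomial growth, $\pi^*\in[0,1]$, and admissibility) must be established. A secondary delicate point is the selection of the integration constant $A=1$: both $\pi=0$ and $\pi=1$ are singular for the ode, so the value of $F$ at an interior point cannot be used to fix $A$, and one must argue directly from the blow-up at the cash-crunch boundary.
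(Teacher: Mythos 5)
Your proposal is correct and takes essentially the same route as the paper: the paper's own proof is a terse verification that \eqref{eq:f-hyperbolic} solves the $\gamma=-1$ specialization of \eqref{eq:f-power} followed by substitution into the liquid-regime equation, and your constructive version (the $\pi=\e^{-z}$ substitution turning \eqref{eq:h-deriv} into the quadratic $\phi'=-\tfrac{\beta}{4}\phi^2+\phi+\la_{10}B$, partial fractions over the roots $\tfrac{2}{\beta}(1\pm\eta)$ with $\eta\equiv\eta(-1)$, and fixing the integration constant $A=1$ from the cash-crunch blow-up) is exactly the Section~\ref{sec:power-utility} machinery the paper invokes, with the constant selected just as $C=1$ is selected in Theorem~\ref{thm:log} and with uniqueness and polynomial growth covered by Lemma~\ref{lem:power-ode-soln}; your two flagged delicate points (the transcendental fixed-point nature of \eqref{eq:coupled-hyberbolic} in $B$, which the paper only resolves numerically, and the fact that both endpoints $\pi=0,1$ are singular so $A$ must be pinned down by the blow-up) are accurate and no weaker than what the paper itself establishes. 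One caveat worth recording: carried to completion, your first-order conditions give $c^{0,*}=x/\sqrt{B}$ (consistent with the $-2\sqrt{B}$ term in \eqref{eq:coupled-hyberbolic}) and
\begin{align*}
c^{1,*}(\pi,x)=\frac{2x}{\phi}=\frac{x\,\beta\,(1-\pi^{\eta})}{(1+\eta)-(1-\eta)\pi^{\eta}},
\end{align*}
whereas the radicand printed in the theorem satisfies $(1+\pi^{\eta})\bigl[(1+\eta)^2+\pi^{\eta}(1-\eta)^2\bigr]=\bigl[(1+\eta)-(1-\eta)\pi^{\eta}\bigr]^2+4\pi^{\eta}$, so the displayed schedules ($x/B^2$ and the square-root expression) are inconsistent with $c^{0,*}=(u')^{-1}(V^0_x)$ and $c^{1,*}=x/\sqrt{F+\pi F'}$ for the $F$ of \eqref{eq:f-hyperbolic} --- these appear to be typos in the statement rather than errors in your argument.
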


The proof of Theorem \ref{thm:hyperbolic} follows straightforwardly by checking that the solution $F(\pi)$ satisfies the corresponding version of \eqref{eq:f-power} and then substituting back into the liquid regime equation \eqref{eq:power-v-func}.
The algebraic equation \eqref{eq:coupled-hyberbolic} can be solved easily numerically for $B$. For small $\la_{01}$ the solution of the uncoupled HJB equation \eqref{eq:uncoupled-hjb} is obtained by using $\eta(-1) = (1+ \beta \la_{10}\hat{f})^{1/2}$ in \eqref{eq:f-hyperbolic} and keeping \eqref{eq:coupled-hyberbolic} as is.

\begin{rem}\label{rem:large-sharpe-hara}
In the case of the large Sharpe ratio $\hat{\pi}>1$ we can use \eqref{eq:f-hyperbolic} to re-compute the small $\la_{01}$-asymptotics. Omitting the details that are very similar to the derivations of \eqref{eq:log-pi-bigger-1}-\eqref{eq:log-V-pi-bigger-1} we find
\begin{align*}
 \pi^* & = 1 - \sqrt{\la_{01}} \pi_1  + o(\la_{01}), \qquad \pi_1 = 2 \beta^{-1} \left\{ \hat{f}(\mu-r-2 \sigma^2)\right\}^{-1/2}.
%
\end{align*}
Note that the correction terms for both $\pi^*$ and $b$ are now of order $\sqrt{\la_{01}}$.
\end{rem}

\subsection{Square-Root Utility}
Another closed-form solution for $W(\pi,x)$ is possible when $\gamma=0.5$ and $u(x) = 2\sqrt{x}$. This case is instructive for showing the interaction of positive risk-aversion with the liquidity freezes.

With $\gamma=0.5$, the ode \eqref{eq:h-deriv} can be re-written in the integral form as
\begin{align*}
 \int dz + C = \int \frac{-2 \phi}{ \phi^2 + \la_{10}\hat{f} \phi - \frac{\rho - 0.5 r + \la_{10}}{2}} d\phi, \qquad \phi(0) = 0.
\end{align*}
The integral on the right hand side can be computed using partial fraction decomposition. The resulting implicit solution is
\begin{equation}\label{eq:g-square-root}
\begin{split}
\e^{z} & = \Bigl( 1+ \frac{ 2 \phi(z)}{\eta(0.5) + \la_{10}\hat{f}} \Bigr)^{-1-(\la_{10}\hat{f})/\eta(0.5)} \cdot \Bigl( 1 - \frac{2 \phi(z)}{\eta(0.5) - \la_{10}\hat{f}} \Bigr)^{-1+(\la_{10}\hat{f})/\eta(0.5)}\\
\end{split}
\end{equation}
where $\eta(0.5)  \triangleq \sqrt{2(\rho-0.5r+\la_{10})+(\la_{10}\hat{f})^2}$.
We remark that for typical parameter values, $\lim_{z \to\infty} \phi(z)$ is small, so that $\phi(z)$ (and hence $f(\pi)$) is very flat and consumption is not very sensitive to $\pi$.

\subsection{Connection with Abel Cubic Ode}
When $\gamma=-1/2$, the function in the RHS of \eqref{eq:h-deriv} is
a cubic polynomial with polynomial discriminant $D
=\frac{2}{27}(\rho + 0.5 r +\la_{10})-\frac{4}{27} (\rho + 0.5 r +
\la_{10})^2 (\la_{10} \hat{f})^2$. In a typical market we have $D<0$
and thus this cubic equation has one real root (namely,
$h_1$) and two complex conjugate roots (namely, $h_2$ and $\bar{h_2}$).
These roots can be obtained from the cubic formula.
Let $p = \Re(h_2)$ and $q = \Im(h_2)>0$ denote the real and imaginary parts of the complex
root $h_2$. Then
\begin{equation*}
\begin{split}
h_1=&\frac{\eta(-0.5)}{(\rho+0.5 r+\la_{10})} + \frac{9}{8\eta(-0.5)}, \qquad p = -\half h_1, \qquad q = \frac{\sqrt{3}}{2}\Bigl( h_1 - \frac{9}{4 \eta(-0.5)}\Bigr)
\end{split}
\end{equation*}
where
\begin{equation*}
\eta(-0.5)\triangleq\frac{3}{4} (\rho+0.5 r+\la_{10})^{2/3}\left(4 \la_{10}\hat{f}
+ \sqrt{(4\lambda_{10}\hat{f})^2-\frac{8}{(\rho+0.5
r+\la_{10})}}\right)^{1/3}.
\end{equation*}
Since \eqref{eq:h-deriv} is a separable ode, it can be again solved
explicitly using the above roots and partial fraction decomposition.
Furthermore, since the two complex roots are conjugate, we find the
real-valued solution to the Abel ode. To conclude, the implicit solution to the
ode \eqref{eq:h-deriv} for $\gamma=-0.5$ is given by
\begin{multline}\label{eq:g-abel}
\log \left\{\frac{|\phi(z)-h_1|}{
\sqrt{(\phi(z)-p)^2+q^2} }\right\}+\frac{h_1-p}{q}
\arctan \left(\frac{q}{\phi(z)-p}\right) \\
= -\frac{4}{27}((h_1-p)^2+q^2)(\rho+0.5r+\lambda_{10}) z.
\end{multline}
\begin{rem}
For the case of $\gamma=-1/3$, the degree of the polynomial ode
\eqref{eq:h-deriv} is four. The roots of a polynomial up to degree 4
can be explicitly found in terms of its coefficients. Using the
quartic formula, the polynomial ode \eqref{eq:h-deriv} of degree 4 can then also
be explicitly solved.
\end{rem}

\section{Numerical Examples}\label{sec:numeric}
To illustrate our derivations above we present a series of numerical examples. Recall that the model for $\gamma=0$ admits closed-form solutions given in Theorem \ref{thm:log} and for $\gamma=-1$ in \eqref{eq:f-hyperbolic}-\eqref{eq:coupled-hyberbolic}. These formulas are explicit up to finding the maximizer $\pi^*$ of the liquid regime that can be easily obtained through a standard numerical optimization algorithm. Formulas \eqref{eq:g-square-root} and \eqref{eq:g-abel} give implicit solutions for $W(\pi,x)$ for $\gamma=+0.5$ and $\gamma=-0.5$ respectively. Inverting these expressions numerically and plugging them into \eqref{eq:power-v-func} one can again straightforwardly solve the model. Finally, for general $\gamma$ one can use numerical integration routine on \eqref{eq:tildeg-soln} and combine it with a root-finding algorithm for \eqref{eq:power-v-func} to compute $V(x)$ and $W(\pi,x)$. The coupled system \eqref{eq:hjb-inf-horizon} can be solved similarly by employing a
  Picard iteration over successive numerical approximations to $V^0(x)$ and $V^1(\pi,x)$.

The key parameter in our model is the coefficient of risk-aversion $\gamma$. To illustrate its effect we focus on the three cases of $\gamma \in \{ 0, 0.5, -1\}$ which correspond to log-utility, square-root and hyperbolic risk-aversion coefficients. As discussed above, if $\hat{\pi}$ is moderate (less than $60-70\%$), the effect of liquidity freezes is typically negligible; therefore the interesting situation is where $\hat{\pi}$ is close to 1. Since $\hat{\pi}$ depends on $\gamma$ (cf.\ \eqref{eq:merton-pi-power}), to compare different HARA utilities we \emph{fix} $\hat{\pi}$ and then adjust the excess return $\mu-r$ as $\gamma$ varies. In other words we take $\mu = r + \hat{\pi}\sigma^2(1-\gamma)$.

Our base parameter set is $\rho = r = 0.05$, $\sigma=0.167$, $\alpha= 0.05$, $L=0$, $\la_{01} = 0.1$, $\la_{10} = 2$. The stock drift is $\mu = 0.05+0.025(1-\gamma)$, equivalent to $\hat{\pi} = 0.9$, which is rather large and makes the effect of illiquidity noticeable. This base case is supposed to represent a realistic stock market that experiences a major liquidity crisis about every ten years (e.g.\ historically in 1987, 1998, 2008, etc.); once started, the crisis lasts for about six months. To study the effect of liquidity shock parameters, we also vary the $\la$'s and $L$.

To study our asymptotic approximations for small $\la_{01}$, in Table
\ref{table:num} we compare the solution using the coupled system \eqref{eq:hjb-inf-horizon} to the approximate formulas \eqref{eq:pi-log-correction} ($\gamma=0$ case) and \eqref{eq:power-small-b} ($\gamma \neq 0$ case). The last two columns also compare the efficiency loss (relative to the classical Merton optimizer with $\la_{01}=0$) using the exact solution and the small-$\la_{01}$ approximation. As expected, when $\la_{01} \downarrow 0$, the optimal investment proportion $\pi^*$ converges to the Merton solution $\hat{\pi}$ and the quality of the asymptotic approximations improves. Same conclusions hold for the efficiency loss.

As can be seen in Table \ref{table:num}, for realistic parameter values and $L=0$, the efficiency loss is mild (less than 2\% for logarithmic and hyperbolic utilities and less than 1\% for square-root utility). This conclusion is consistent with the ``relaxed investor'' property of the Merton problem found by \cite{Rogers00,RogersZane} who studied efficiency loss when trading is discretized on a time grid. Thus, inability to rebalance continuously is not crucial, while cash crunch constraints will also be weak unless $\la_{10}$ is small.

Not surprisingly, if liquidity shocks are accompanied by significant price decreases, a major utility loss is experienced since the excess return of the asset is effectively cut. Even if a liquidity shock occurs once every 10 years and causes a relatively minor negative jump of $L=10\%$ in asset price, the optimal investment fraction $\pi^*$ in the risky asset is reduced from 90\% to 52\% for log-utility and 69\% for hyperbolic utility (the square-root investor is barely affected and down-weighs to 88.8\%).

In general, the derived asymptotic expressions underestimate the adjustment in risky investments $\pi^* < \hat{\pi} + \la_{01} \pi_1$ and overestimate the efficiency loss $\bt < \la_{01} \bt_1$. This is intuitive given the convexity/concavity of the  respective quantities and the linear approximations taken. Note that comparison of given quantities for different utilities is rather difficult, although we can claim that the square-root optimizer is highly insensitive to liquidity and the hyperbolic optimizer is somewhat more sensitive than the log-utility agent. Overall, Table \ref{table:num} seems to indicate that while the risk of negative asset jumps \emph{is} crucial, the risk of illiquidity \emph{per se} is not a significant driver of optimal investment decisions.


\begin{table}[ht]
$$\begin{array}{|l|cc|rr|}
\hline  \multicolumn{5}{|c|}{\text{Log Utility  } u(x) = \log(x), \mu = 0.075} \\
\text{Case} & \pi^* & \hat{\pi}+\la_{01}\pi_1 & \bt\; & \;\la_{01}\bt_1 \\ \hline
 \text{Base} & 0.879  &  0.846  &  1.08  &  1.15 \\
 \text{w/ }\la_{01} = 0.05 &  0.886  &   0.873 &   0.56  &  0.58 \\
 \text{w/ }\la_{01} = 0.02 & 0.893 &    0.889 &   0.23 & 0.23 \\ 
 \text{w/ }\la_{10} = 4 & 0.900 &     0.899 &    0.54 &  0.55 \\
\text{w/ }L = 0.1 & 0.521   & 0.467  & 13.97 &  18.13 \\
\text{w/ }\la_{01} = 0.5, \la_{10} = 10 & 0.900  &  0.900   & 1.06  &  1.11 \\ 
\hline \hline  \multicolumn{5}{|c|}{\text{Hyperbolic Utility  } u(x) = -x^{-1}, \mu = 0.1} \\
\text{Case} & \pi^* & \hat{\pi}+\la_{01}\pi_1 & \bt & \la_{01}\bt_1 \\ \hline
 \text{Base} & 0.857  &  0.683  &  1.92   & 2.36 \\
 \text{w/ }\la_{01} = 0.05 & 0.867   & 0.792  &  1.01  &  1.18 \\
 \text{w/ }\la_{01} = 0.02 & 0.879  &  0.857  &  0.43 & 0.47 \\ 
 \text{w/ }\la_{10} = 4 & 0.895   & 0.901  &  0.90  &  0.92 \\
\text{w/ }L = 0.1 & 0.691  &   0.652  &  17.16  & 18.55  \\
\text{w/ }\la_{01} = 0.5, \la_{10} = 10 & 0.900  &  0.916  &  1.77 & 1.83 \\ 
\hline \hline  \multicolumn{5}{|c|}{\text{Square-Root Utility  } u(x) = 2 \sqrt{x}, \mu = 0.0625} \\
\text{Case} & \pi^* & \hat{\pi}+\la_{01}\pi_1 & \bt & \la_{01}\bt_1 \\ \hline
 \text{Base} & 0.895   & 0.898  &  0.589  &  0.623 \\
 \text{w/ }\la_{01} = 0.05 & 0.897   & 0.899  &  0.302  &  0.312 \\
 \text{w/ }\la_{01} = 0.02 &  0.899 &   0.900  &  0.120  &  0.125 \\
 \text{w/ }\la_{10} = 4 & 0.898 & 0.900 &  0.304 & 0.314 \\
\text{w/ }L = 0.1 & 0.888  &  0.335  &  16.737  &  20.213 \\
\text{w/ }\la_{01}=0.5, \la_{10} = 10 & 0.878 &    0.838 &   0.606   & 0.638 \\
\hline \end{array}$$
\caption{Optimal investment proportions and efficiency loss for HARA utility-maximizers with $\gamma=0$ (log-utility), $\gamma=-1$ (hyperbolic utility) and $\gamma=0.5$ (square-root utility). $\bt$ measures the percent efficiency  loss compared to $\la_{01}=0$ (Merton problem) and $\bt_1$ is the asymptotic efficiency  loss derived in \eqref{eq:perc-loss-log} and \eqref{eq:perc-loss-power}.
\label{table:num}}
\end{table}

To further explore the role of investor's risk aversion, Figure \ref{fig:c1star} compares the optimal consumption schedules in the illiquid regime for three different levels of $\gamma$, namely $\gamma\in \{0.5,0,-0.5\}$. While comparison of value functions is hard due to different utility units, all consumption is proportional to total current wealth $x$ and moreover must decrease as the proportion of wealth in cash $(1-\pi)$ shrinks. Recall that in the classical Merton setting,
$c^*/x > \rho$ when $\gamma < 0$ and $c^*/x > \rho$ when $\gamma > 0$ and the proportion of wealth consumed is maximized at $\gamma = -1$.

\begin{figure}[ht]
\center{\includegraphics[height=3in]{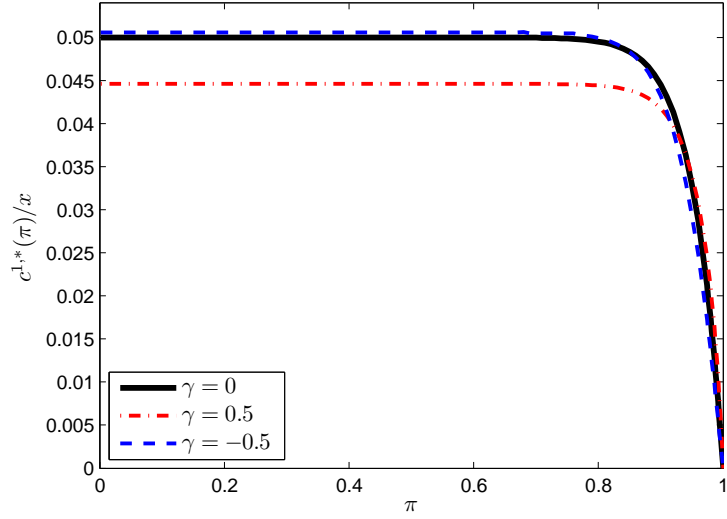}}
\caption{Consumption schedules $c^{1,*}$ in the illiquid regime. Parameter values are fixed at $\sigma = 0.166, \mu = 0.0625, r=\rho=0.05$ and $\la_{10}=1, L=0$. We compare the square-root ($\gamma=0.5$), logarithmic ($\gamma=0$) and inverse square-root ($\gamma=-0.5$) maximizers. All values are normalized by current wealth (i.e.\ we plot $c^{1,*}(\pi,x)/x$). \label{fig:c1star}}
\end{figure}

As shown in Figure \ref{fig:c1star}, $c^{1,*}$ is decreasing in $\gamma$ for moderate values of $\pi$ and is increasing in $\gamma$ for $\pi$ close to 1. This confirms our intuition that the cash crunch is most felt by the most risk-averse investors. Interestingly, around $\pi \sim 0.9$, no obvious ordering between $c^{1,*}(x; \gamma)$ is possible. For $\pi < 0.8$, $f'(\pi)$ is very small all $\gamma$'s, and therefore consumption is essentially constant in $\pi$.

\section{Homogenized Limit}\label{sec:homogenized}
We now consider the homogenized limit $\eps \to 0$ with respect to the transition matrix $Q^\eps = Q/\eps$ of $M$. This corresponds to the model where liquidity regimes change on a fast time-scale, with $\bar{\la}=\la_{10}/(\la_{01}+\la_{10})$ proportion of time spent in the liquid unconstrained regime. Formally, we assume that that the transition rates of $M$ are $\la_{10}/\eps$ and $\la_{01}/\eps$, the jumps are rescaled as $L = \eps\bar{L}$, and then proceed to carry out a first-order formal asymptotic expansion in the small parameter $\eps$.

As $\eps\to 0$, the effective duration of liquidity shocks goes to zero. Therefore, the investment proportion process $\pi_t$ converges  to a constant value, which is the optimizer of the equation defining $V^0(x)$.  Simultaneously, this means that $V^1(\cdot, x)$ becomes less and less sensitive to $\pi$, $\lim_{\eps \to 0} V^1_\pi(\pi,x) = 0$ for all $\pi \in [0,1)$ (super-polynomially), and therefore the optimization over $\pi$ in \eqref{eq:hjb-inf-horizon} is simplified.

As $\eps \to 0$, the ergodic Markov chain $M$ averages out the drift and volatility of $S$. Let us call a \emph{cycle} a pair of transitions of $M$ from state 0 to state 1 and back to state zero. Then the length of a cycle is about $\sim \eps(\la_{01}^{-1} + \la_{10}^{-1})$ during which the expected mean of an $S$-increment is $\sim \mu \eps \la_{01}^{-1} + \alpha \eps \la_{10}^{-1} - \eps \bar{L}$ and the quadratic variation of $S$ over one cycle is $\sim \eps \la_{01}^{-1} \sigma^2$. Simplifying the above expressions which become exact as $\eps \to 0$ and using the fact that the homogenized process remains a diffusion we obtain that the averaged asset dynamics $\bar{S}$ are
\begin{align*}
d\bar{S}_t/ \bar{S}_t = (\alpha +  (\mu - \alpha - \la_{01}\bar{L})\bar{\la} ) \, dt + \bar{\la}^{1/2} \sigma \, dW_t.
\end{align*}
However, as we will see below, the optimal strategy will not converge to the Merton solution corresponding to $\bar{S}$.

\subsection{Log Utility}
First, consider the log-utility investor.
We directly work with \eqref{eq:hjb-inf-horizon} and make the ansatz $V^0(x) = B_0 + \eps B_1 + \O(\eps^2) + \frac{1}{\rho} \log x$ and $V^1(\pi,x) = B_0 + \eps F(\pi) + \O(\eps^2)$. The leading term is the same in both equations because the dominant term in the expression for $V^1$ is $\la_{10}(V^0(x) - V^1(\pi,x))$. Collecting terms of order $\O(1)$ in the illiquid regime we obtain
$$
-\rho B_0 + \frac{1}{\rho}( r + (\alpha -r)\pi - \rho) + \log \rho = \la_{10}( F(\pi) - B_1).
$$
Back in the liquid regime, we find
$\pi^*_{hom} = \frac{\theta_{hom}}{2 \sigma}$, where \[
\theta_{hom} \triangleq \frac{1}{\sigma} ( \mu-r - \la_{01}\bar{L} + \frac{\la_{01}}{\la_{10}}(\alpha -r)  )\] and  $B_0$ solves
\begin{align}\label{eq:hom-limit-log}
B_0 = \frac{r/\rho -1 + \log \rho}{\rho} + \bar{\la}  \frac{\theta_{hom}^2}{2 \rho^2}.
\end{align}
Comparing with the benchmark in \eqref{eq:merton-inf-horizon-log}, we may interpret the limiting value function in \eqref{eq:hom-limit-log} as solution of the classical problem for a risky asset
with a modified squared Sharpe ratio of $\bar{\la} \theta_{hom}^2$ rather than $\theta^2$ as in \eqref{eq:merton-inf-horizon-log}. Therefore, the homogenized loss in utility computed in \eqref{eq:perc-loss-log} translates into
\begin{align}
\label{eq:perc-loss-fast-log}
\lim_{\eps\to 0} \bt \triangleq \bt^{(\log)}_{hom} & = 1 - \exp( \frac{1}{2 \rho} (\bar{\la}\theta_{hom}^2 - \theta^2)).
\end{align}
Counter-intuitively, we find that the derivative of $\bt^{(\log)}_{hom}$ with respect to $\la_{01}$ and $\la_{10}$ may be of either sign depending on parameter values. For instance, if $\mu > \alpha > r$ and $L = 0$ then $\partial\bt^{(\log)}_{hom}/\partial \la_{10} > 0$; however if $\mu$ is large and $\alpha < r$ then $\partial\bt^{(\log)}_{hom}/\partial \la_{10}  <0$.

In the special case $\alpha=r$, the explicit results of Proposition \ref{thm:log} apply and yield \eqref{eq:hom-limit-log} directly. In that case since the growth rate of the risky asset during liquidity shocks is $\alpha = r$, we can assign the illiquid regime a Sharpe ratio of zero.
 Then  $\bar{\la} \theta^2_{hom}$  corresponds exactly to averaging the squared Sharpe-ratios of each liquidity regime according to the invariant distribution of $M$. In contrast, the averaged price process $\bar{S}$ has excess return of $\bar{\la}(\mu-r - \la_{01}\bar{L})$ and volatility of $\sqrt{\bar{\la}}\sigma$. Thus, the illiquidity constraint causes the agent to apply the \emph{wrong} strategy over the averaged stock process. This is particularly clear in the case $L=0$ whereby $\lim_{\eps \to 0} \pi^* = \hat{\pi}$ which is the optimal strategy of the \emph{liquid} regime that is employed over the homogenized price process $\bar{S}$. In the related setting of a fast-scale regime-switching stock model with $M$-modulated drift and volatility (but no liquidity constraints), \cite{BauerleRieder04} showed that the log-investor also loses $\bt^{(\log)}_{hom}$ of utility compared to the Merton model. Therefore, for the log-investor in the homogenized limit the liquidity constraint completely disappears. When $\alpha \neq r$,
$\theta_{hom}$ no longer has an averaging interpretation and  the liquidity constraint remains in force even in the homogenized limit. For Example 1 in Section \ref{sec:numeric} above, we find $\bt^{(\log)}_{hom} = 1.066\%$, which is very close to the value $\Theta = 1.061\%$ for $\la_{01}=0.5, \la_{10}=10$ in Table \ref{table:num}.

\subsection{HARA Utility}

To obtain the homogenized limit for the HARA investor, we carry out a similar formal asymptotic expansion of equations \eqref{eq:f-power} and \eqref{eq:power-v-func} in the small parameter $\eps$.  Starting  once more from  the ansatz $V^0(x) = (B_0 + \eps B_1 + \O(\eps^2)) x^\gamma/\gamma$ and $V^1(\pi,x) = (B_0 +\eps F(\pi))x^\gamma/\gamma$, we obtain
\begin{align*}
\la_{10} (F(\pi) -B_1) = (-\rho + \gamma r + (\alpha -r)\pi)B_0 + (1-\gamma)B_0^{\frac{\gamma}{\gamma-1}},
\end{align*}
and the optimizer $\pi^*$ is
\begin{align*}
\pi^* = \frac{1}{(1-\gamma)\sigma^2} \left[ \mu - r - \la_{01}\bar{L} + \frac{\la_{01}}{\la_{10}\gamma} (\alpha -r) \right].
\end{align*}
Denoting $\theta^{(\gamma)}_{hom} \triangleq \frac{1}{\sigma} \left(\mu - r - \la_{01}\bar{L} + \frac{\la_{01}}{\la_{10}\gamma} (\alpha -r) \right)$, the equation for $B_0$ reduces to
\begin{align} \notag
 \bigl((-\rho+\gamma r) & B_0 + (1-\gamma)B_0^{\frac{\gamma}{\gamma-1}}\bigr)\left( 1+ \frac{\la_{01}}{\la_{10}} \right) + \frac{\gamma}{2(1-\gamma) \sigma} \left( \theta^{(\gamma)}_{hom} \right)^2 \cdot B_0 = 0, \\ \notag
 \Longrightarrow \quad B_0  & = (1-\gamma)^{1-\gamma}\left(\rho  -\gamma r - \bar{\la}(\theta^{(\gamma)}_{hom})^2 \frac{\gamma}{ 2(1-\gamma)} \right)^{\gamma-1}.
\end{align}
Again, we interpret the expression for $B_0$ as the solution of the classical Merton problem after modifying the Sharpe ratio of the risky asset to $\bar{\la}\theta^{(\gamma)}_{hom}$. The resulting efficiency loss is
\begin{align}
\bt^{(\gamma)}_{hom} = 1 - \left( \frac{ \rho - \gamma r - \bar{\la} \frac{\gamma (\theta^{(\gamma)}_{hom})^2}{ 2(1-\gamma)} }{ \rho - \gamma r - \frac{\gamma \theta^2}{2 (1-\gamma)} } \right)^{(\gamma-1)/\gamma}.
\end{align}

For Example I in Section \ref{sec:numeric} the efficiency losses for the homogenized limit are $\bt^{(-1)}_{hom} =1.742\%$ for Hyperbolic utility (compare with $1.766\%$ loss for $\la_{01} = 0.1, \la_{10} = 2$ in Table \ref{table:num}) and $\bt^{(0.5)}_{hom} = 0.600\%$ for Square-root utility (compare with $0.606\%$ loss in last row of Table \ref{table:num} for $\gamma=0.5$). Thus, these limiting expressions give accurate approximations even for moderate values of $\la_{01}$ and $\la_{10}$.

\section{Logarithmic Consumption on Finite Horizon}\label{sec:finite-horizon}
We proceed to study a finite horizon version of the model above. With finite horizon, time inhomogeneity introduces additional effects. In particular, close to terminal date $T$, both the cash crunch and the opportunity cost of illiquidity vanish.
The corresponding stochastic control problem under consideration is now:
\begin{align}\label{eq:finite-horizon-log}
\sup_{(\pi,c) \in \mathcal{A}} \E^x \left[ \int_0^T \e^{-\rho s}\log c_s \, ds \right].
\end{align}

\begin{prop}
For model \eqref{eq:finite-horizon-log}, the Merton solution is 
\begin{align}\label{eq:merton-finite-horizon-log}
\hat{V}(t,x) & \triangleq \hat{k}(t) \log x + \hat{h}(t) \\ \notag &
= \frac{1}{\rho}(1-\e^{-\rho(T-t)})\log x -\frac{1}{\rho}(1-\e^{-\rho(T-t)})\log
\left(\frac{1-\e^{-\rho(T-t)}}{\rho}\right)
\\& \notag \qquad\qquad +\left(\frac{(\mu-r)^2}{2\sigma^2 \rho^2}
+\frac{r-\rho}{\rho^2}\right)\left(1-\e^{-\rho
(T-t)}(1+\rho(T-t))\right)
\end{align}
The optimal consumption  and investment strategies are:
$$ \hat{c} = \frac{\rho x}{1-\e^{-\rho (T-t)}}, \qquad\qquad \hat{\pi} = \frac{\mu-r}{\sigma^2}.$$
\end{prop}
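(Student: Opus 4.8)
The plan is to solve \eqref{eq:finite-horizon-log} by dynamic programming and then verify optimality. First I would write the associated HJB equation. Passing to the Markovian (current-time discounted) value function $\hat V(t,x) = \sup_{(\pi,c)} \E^x[\int_t^T \e^{-\rho(s-t)} \log c_s \,ds \mid X_t = x]$, which agrees with \eqref{eq:finite-horizon-log} at $t=0$, the dynamic programming principle yields
\begin{align*}
\hat V_t - \rho \hat V + \sup_{\pi \in [0,1],\, c \ge 0} \Big\{ (x(r + (\mu-r)\pi) - c)\hat V_x + \half \sigma^2 \pi^2 x^2 \hat V_{xx} + \log c \Big\} = 0,
\end{align*}
with terminal condition $\hat V(T,x) = 0$. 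Motivated by the homothetic structure of the problem and by the stationary form in Proposition \ref{lem:merton-inf-log}, I would make the separable ansatz $\hat V(t,x) = \hat k(t) \log x + \hat h(t)$.

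Next I would carry out the pointwise maximization. Since $\hat V_x = \hat k(t)/x$ and $\hat V_{xx} = -\hat k(t)/x^2$, the first-order conditions decouple: the consumption optimizer is $c^*(t,x) = 1/\hat V_x = x/\hat k(t)$, while the investment optimizer $\pi^* = -(\mu-r)\hat V_x/(\sigma^2 x \hat V_{xx}) = (\mu-r)/\sigma^2$ is constant in both $t$ and $x$ and equals the classical Merton fraction $\hat\pi$. The logarithm is exactly what makes the $x$-dependence separate cleanly and forces $\pi^*$ to be time-independent.

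Substituting $c^*$ and $\pi^*$ back into the HJB and collecting the coefficients of $\log x$ and the $x$-free terms gives two decoupled linear first-order ODEs. The $\log x$ terms yield $\hat k'(t) - \rho \hat k(t) + 1 = 0$ with $\hat k(T) = 0$, whose solution is immediately $\hat k(t) = \rho^{-1}(1 - \e^{-\rho(T-t)})$, matching the stated $\hat k$ and the claimed optimal consumption $\hat c = \rho x /(1 - \e^{-\rho(T-t)})$. The $x$-free terms yield the linear ODE
\begin{align*}
\hat h'(t) - \rho \hat h(t) + \Big( r + \frac{(\mu-r)^2}{2\sigma^2} \Big) \hat k(t) - 1 - \log \hat k(t) = 0, \qquad \hat h(T) = 0.
\end{align*}

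Finally I would solve this $\hat h$-equation by the integrating factor $\e^{-\rho t}$ (equivalently $\e^{\rho(T-t)}$), integrating term by term. The $-\log \hat k$ source produces the first term $-\hat k(t)\log \hat k(t)$ of the stated $\hat h$ (note $\hat k(t) = \rho^{-1}(1-\e^{-\rho(T-t)})$), while the affine part $(r + \frac{(\mu-r)^2}{2\sigma^2})\hat k - 1$ produces, after integration, the $(1+\rho(T-t))$ factor in the second term; matching coefficients then reproduces \eqref{eq:merton-finite-horizon-log}, with $\frac{r}{\rho^2}-\frac1\rho = \frac{r-\rho}{\rho^2}$ and $\frac{(\mu-r)^2}{2\sigma^2\rho^2}$ appearing exactly as claimed. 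I expect this last integration to be the main obstacle: the source contains $\log \hat k(s) = \log\big(\rho^{-1}(1-\e^{-\rho(T-s)})\big)$, so evaluating $\int \e^{\rho s}\log(1-\e^{-\rho(T-s)})\,ds$ takes some care, though it remains elementary. To conclude, I would invoke a finite-horizon verification theorem (the time-dependent analogue of Proposition \ref{thm:verify}): the candidate $\hat V$ is $\mathcal{C}^{1,2}$ with polynomial growth in $x$, the candidate controls $(\pi^*, c^*)$ are admissible (keeping wealth positive and the discounted running utility integrable), and a standard martingale argument then identifies $\hat V$ with the value function and $(\pi^*, c^*)$ as optimal.
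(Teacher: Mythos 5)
Your proposal is correct and is exactly the standard derivation this classical result rests on: the paper states the Proposition without proof, and its own treatment of the perturbed finite-horizon problem (Theorem \ref{thm:finite-horizon}) uses precisely your ansatz $\hat{k}(t)\log x + \hat{h}(t)$, the same first-order conditions, and the same decoupled linear ODEs $\hat{k}' - \rho\hat{k} + 1 = 0$ and $\hat{h}' - \rho\hat{h} + (r + \frac{(\mu-r)^2}{2\sigma^2})\hat{k} - 1 - \log\hat{k} = 0$, whose solution one checks (via $\hat{k}' - \rho\hat{k} = -1$ and $g' - \rho g = -\rho^2 \hat{k}$ for $g(t) = 1-\e^{-\rho(T-t)}(1+\rho(T-t))$) matches \eqref{eq:merton-finite-horizon-log} with $A = \frac{(\mu-r)^2}{2\sigma^2\rho^2} + \frac{r-\rho}{\rho^2}$. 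The only cosmetic point is that writing $\sup_{\pi\in[0,1]}$ and then taking the interior optimizer $\hat{\pi} = (\mu-r)/\sigma^2$ implicitly assumes $\hat{\pi} \le 1$, which is the same implicit assumption the paper makes for its Merton benchmark.
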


Assuming that $\la_{01}$ is small and with a slight abuse of notation, denote by $W(t,\pi,x)$ and $V(t,x)$ the first-order correction terms to $V^1(t,\pi,x)$ and $V^0(t,\pi,x)$ as in \eqref{eq:uncoupled-hjb}. Then the finite-horizon analogue of \eqref{eq:uncoupled-hjb} is now
\begin{align*}\left\{
\begin{aligned}
\sup_{c > 0} & \Bigl\{ W_t  + (x(r + (\alpha-r) \pi)-c) W_x + \pi(1-\pi)(\alpha-r) W_\pi + \pi \frac{c}{x} W_\pi \\ & \qquad + \lambda_{10}( \hat{V}(t,x) - W(t,\pi,x)) + \log c - \rho W\Bigr\}   = 0, \\
\sup_{\pi \in [0,1],c > 0} & \Bigl\{ V_t   + (x(r+(\mu-r) \pi) -
c)V_x + \half x^2 \pi^2 \sigma^2 V_{xx} \\ & \qquad +\la_{01}( W \big(t,
\frac{\pi (1-L)}{1-\pi L},(1-\pi L) x \big) - V(t,x)) + \log c - \rho
V\Bigr\} = 0,
\end{aligned}\right.
\end{align*}
with zero terminal conditions $W(T,\pi,x) = V(T,x) = 0$ for all $\pi,x$, and the singular boundary condition $W(t,1,x) = -\infty$ for all $t,x$.

\begin{thm}\label{thm:finite-horizon}
The value functions for \eqref{eq:finite-horizon-log} are $V(t,x) = \hat{k}(t) \log x + h^0(t)$ and $W(t,\pi,x) = \hat{k}(t) \log x + h^1(t,\pi)$ where $h^i$ satisfy
\begin{align}\label{eq:h-finite-horizon}
0 & = h^1_t -\rho h^1+r \hat{k}(t)  -1 + \la_{10}( \hat{h}(t) - h^1(t,\pi))
-\log(\hat{k}(t) - \pi h^1_\pi(t,\pi)),\\ \label{eq:h0-finite}
h^0(t) &= \int_t^T \e^{-(\rho+\la_{01})(s-t)}[r \hat{k}(s) -1 - \log \hat{k}(s)+
\xi(s) ] \, ds,
\end{align}
with terminal conditions $h^0(T) = 0$, $h^1(\pi,T) = 0$ and
\begin{equation} \label{eq:xi-finite}
\xi(t)\triangleq \sup_{\pi \in [0,1]} \Bigl\{ (\mu-r) \pi \hat{k}(t) - \half
\sigma^2 \pi^2 \hat{k}(t) +\la_{01} \cdot \left[\hat{k}(t) \log(1-\pi L)  + h^1
\big(t,\frac{\pi (1-L)}{(1-\pi L)} \big)\right] \Bigr\}.
\end{equation}
The small $\la_{01}$-asymptotics are
\begin{equation}\label{eq:small-la-finite} \left\{ \begin{aligned}
\pi^* &= \hat{\pi} +
\frac{\la_{01}}{\sigma^2}\left[\frac{-L}{1-\hat{\pi}L}+
\frac{1}{\hat{k}(t)} h^1_{\pi} (t, g(\hat{\pi}))g'(\hat{\pi})\right], \\
h^0(t) &= \hat{h}(t) + \la_{01} \Bigl[\int_t^T \e^{-\rho(s-t)}(h^1(s,g(\hat{\pi}))- \hat{h}(s))\,ds \\ & \qquad\qquad + \log(1-\hat{\pi}L) \left(\frac{1-\e^{-\rho(T-t)}(1+\rho(T-t))}{\rho^2}\right)
 \Bigr]. \end{aligned}\right.
\end{equation}
\end{thm}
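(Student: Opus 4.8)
The plan is to mirror the proof of Theorem~\ref{thm:log}, upgrading the constants to time-dependent functions and the algebraic relation for $b$ to a linear ODE in $t$. Guided by the form of the finite-horizon Merton value function in \eqref{eq:merton-finite-horizon-log}, I would substitute the separated ansatz $W(t,\pi,x)=\hat{k}(t)\log x + h^1(t,\pi)$ and $V(t,x)=\hat{k}(t)\log x + h^0(t)$ into the uncoupled system preceding the statement. In the illiquid equation the first-order condition in $c$ gives $c^{1,*}=x/(\hat{k}(t)-\pi h^1_\pi)$, whereupon the standard cancellation $-cW_x+\pi(c/x)W_\pi=-1$ holds at the optimum. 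Collecting the coefficient of $\log x$ produces $\hat{k}'(t)-\rho\hat{k}(t)+1=0$ with $\hat{k}(T)=0$, which is precisely the equation solved by $\hat{k}(t)=\rho^{-1}(1-\e^{-\rho(T-t)})$; hence the $\log x$ terms decouple and vanish in both equations simultaneously, explaining why the same $\hat{k}$ appears throughout. The surviving $x$-independent terms (using $\alpha=r$, so the $(\alpha-r)\pi$ drift and the $W_\pi$-drift drop out) yield exactly \eqref{eq:h-finite-horizon}, with the terminal condition $h^1(T,\pi)=0$ and the singular cash-crunch boundary $h^1(t,1)=-\infty$ inherited from $W$.

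For the liquid equation the first-order condition gives $c^{0,*}=x/\hat{k}(t)$, matching the Merton rate. After again cancelling the $\log x$ terms, the shifted argument $W(t,g(\pi),(1-\pi L)x)=\hat{k}(t)\log x+\hat{k}(t)\log(1-\pi L)+h^1(t,g(\pi))$ isolates all $\pi$-dependence into the supremum $\xi(t)$ of \eqref{eq:xi-finite}. What remains is the linear first-order ODE $h^0_t-(\rho+\la_{01})h^0+r\hat{k}-1-\log\hat{k}+\xi=0$ with $h^0(T)=0$, whose variation-of-constants solution is exactly \eqref{eq:h0-finite}. Optimality of the three controls then follows from the verification Proposition~\ref{thm:verify}, once admissibility is checked (in particular $\pi^*<1$ away from $T$ and the boundary behavior forcing the integration constant in the $\pi$-solution of \eqref{eq:h-finite-horizon}).

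For the small-$\la_{01}$ asymptotics I would expand $\pi^*=\hat{\pi}+\la_{01}\pi_1+\O(\la_{01}^2)$ in the first-order condition for $\xi$; the $\O(1)$ part recovers $\hat{\pi}=(\mu-r)/\sigma^2$ and matching the $\O(\la_{01})$ terms gives the stated $\pi_1$. For $h^0$ I would differentiate \eqref{eq:h0-finite} in $\la_{01}$ at $\la_{01}=0$: the discount factor $\e^{-(\rho+\la_{01})(s-t)}$ contributes a weight $-(s-t)$, while by the envelope theorem the correction $\pi_1$ is invisible at first order, so that $d\xi/d\la_{01}\big|_0=\hat{k}(s)\log(1-\hat{\pi}L)+h^1(s,g(\hat{\pi}))$. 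The explicit integral $\int_t^T\e^{-\rho(s-t)}\hat{k}(s)\,ds=\rho^{-2}(1-\e^{-\rho(T-t)}(1+\rho(T-t)))$ then produces the logarithmic jump term in \eqref{eq:small-la-finite}.

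The main obstacle will be reconciling the discount-weight correction with the claimed form of \eqref{eq:small-la-finite}: the naive expansion leaves an awkward term $\int_t^T\e^{-\rho(s-t)}(s-t)G(s)\,ds$, where $G(s)\triangleq r\hat{k}(s)-1-\log\hat{k}(s)+\xi_{(0)}(s)$ and $\xi_{(0)}$ is the $\la_{01}=0$ value of $\xi$, which does not obviously collapse. The key identity is $\int_t^T\e^{-\rho(s-t)}(s-t)G(s)\,ds=\int_t^T\e^{-\rho(s-t)}\hat{h}(s)\,ds$, valid because $\hat{h}(s)=\int_s^T\e^{-\rho(u-s)}G(u)\,du$: interchanging the order of integration over the triangle $t\le s\le u\le T$ turns the inner $s$-integral into precisely the factor $(u-t)$. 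This converts the discount-weighted term into $-\int_t^T\e^{-\rho(s-t)}\hat{h}(s)\,ds$, which combines with the envelope contribution to give exactly the bracketed expression in \eqref{eq:small-la-finite}.
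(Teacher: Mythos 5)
Your proposal is correct and follows essentially the same route as the paper's proof: the separated ansatz $\hat{k}(t)\log x + h^i$, the first-order conditions giving $c^{1,*}=x/(\hat{k}-\pi h^1_\pi)$ and $c^{0,*}=x/\hat{k}(t)$ with the $\log x$ coefficients forcing $k^0\equiv k^1\equiv\hat{k}$, variation of constants for the linear ODE yielding \eqref{eq:h0-finite}, and expansion of the first-order condition and of \eqref{eq:h0-finite} in $\la_{01}$ for \eqref{eq:small-la-finite}. The only difference is that you make explicit, via the envelope argument and the Fubini identity $\int_t^T\e^{-\rho(s-t)}(s-t)G(s)\,ds=\int_t^T\e^{-\rho(s-t)}\hat{h}(s)\,ds$ (which is valid, since $\hat{h}(s)=\int_s^T\e^{-\rho(u-s)}G(u)\,du$ and interchanging the integrals over $t\le s\le u\le T$ produces the factor $(u-t)$), a collapse that the paper compresses into the single phrase ``expanding \eqref{eq:h0-finite} for small $\la_{01}$ to the leading order''---a genuine and correct filling-in of the paper's tersest step.
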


Let us note that the key equation \eqref{eq:h-finite-horizon} is a first-order nonlinear pde on $[0,T] \times [0,1]$. Because of the cash-crunch, it has the degenerate boundary condition $\lim_{\pi \to 1} h^1(t,\pi) = -\infty$ which makes its numerical solution nontrivial. To overcome this difficulty, in Section \ref{sec:cash-crunch} below we carry out an asymptotic expansion of $h^1(t,\cdot)$ around $\pi=1$.

The small $\la_{01}$-asymptotics of the optimal investment fraction show that the effect of liquidity shocks mainly depends on stock jumps since $h^1_{\pi}$ is negligible for typical market parameter values and large time horizon. The correction term in the small $\lambda_{01}$-asymptotics of the value function $h^0$, however, depends on stock jumps as well as efficiency  loss during liquidity shocks. This correction term is always negative and decreases in magnitude as $t \to T$.

\subsection{Cash Crunch on Finite Horizon}\label{sec:cash-crunch}
When the market is illiquid and $\pi \simeq 1$, the agent experiences a cash crunch. The resulting singular boundary condition $V^1(t,1,x) = -\infty$ presents a numerical challenge for solving \eqref{eq:h-finite-horizon}. To resolve this issue,
we carry out an asymptotic analysis of the optimal behavior when $(1-\pi)$ is small.

\begin{lemma}
Let $h^1(t,\pi)$ be given in \eqref{eq:h-finite-horizon}. Then asymptotically for small $1-\pi$ and any $0 < d<1$
\begin{align}\label{eq:cash-crunch-asympt}
h^1(t,1-d\pi) = h^1(t,1-\pi) +\frac{1-\e^{-(\rho + \la_{10})(T-t)}}{\rho+\la_{10}} \log d + o(1-\pi).
\end{align}
\end{lemma}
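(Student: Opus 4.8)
The plan is to isolate the logarithmic singularity of $h^1(t,\cdot)$ at the cash-crunch boundary $\pi=1$ and then read the claimed difference straight off it. Motivated by the explicit infinite-horizon solution \eqref{eq:h-log}, where $h(\pi)=\frac{1}{\rho+\la_{10}}\log(1-\pi)+\O(1)$ as $\pi\uparrow 1$, I would posit the decomposition
\[ h^1(t,\pi)=A(t)\log(1-\pi)+g(t,\pi), \]
with $A(\cdot)$ an unknown time profile and $g(t,\cdot)$ bounded and $C^1$ in $\pi$ up to the boundary. Granting this form the lemma is immediate: the two arguments in \eqref{eq:cash-crunch-asympt} approach $\pi=1$ with distances to the boundary in ratio $d$, so the singular part contributes $A(t)\bigl(\log(d(1-\pi))-\log(1-\pi)\bigr)=A(t)\log d$, while the smooth part $g(t,\cdot)$ contributes only a remainder of order $1-\pi$. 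Thus the entire content reduces to identifying $A(t)$ and matching it to $\frac{1-\e^{-(\rho+\la_{10})(T-t)}}{\rho+\la_{10}}$.

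To pin down $A(t)$ I would substitute the ansatz into the pde \eqref{eq:h-finite-horizon}. Differentiating gives $h^1_\pi=-A(t)/(1-\pi)+g_\pi$, so that as $\pi\uparrow 1$ (and provided $A(t)>0$) one has $\hat{k}(t)-\pi h^1_\pi\sim A(t)/(1-\pi)$ and hence $\log(\hat{k}(t)-\pi h^1_\pi)=-\log(1-\pi)+\log A(t)+o(1)$. Collecting the coefficients of the divergent term $\log(1-\pi)$ in \eqref{eq:h-finite-horizon} — recalling that the $-\rho h^1$ and $\la_{10}(\hat h-h^1)$ terms combine into the coefficient $-(\rho+\la_{10})$ on $h^1$ — leaves the linear ode
\[ A'(t)-(\rho+\la_{10})A(t)+1=0, \]
with terminal condition $A(T)=0$ inherited from $h^1(T,\pi)=0$. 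Solving with the integrating factor $\e^{-(\rho+\la_{10})t}$ yields precisely $A(t)=\frac{1-\e^{-(\rho+\la_{10})(T-t)}}{\rho+\la_{10}}$, which is positive for $t<T$ (so the logarithm is legitimate) and vanishes at $t=T$ (consistent with the absence of a cash crunch at the horizon).

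The main obstacle is to justify, rather than merely assume, that the leading singularity is exactly logarithmic and that $g$ is regular enough for its contribution to be subdominant. For the first point I would argue by self-consistency: a hypothetical stronger singularity $h^1\sim -K(1-\pi)^{-a}$ with $a>0$ would make $\log(\hat{k}-\pi h^1_\pi)$ grow only logarithmically, which cannot balance the $(1-\pi)^{-a}$ blow-up produced by the $-(\rho+\la_{10})h^1$ term; hence only the logarithmic scale closes the dominant balance, exactly as realized in the solvable case \eqref{eq:h-log}. For the second point I would subtract $A(t)\log(1-\pi)$ and check that the residual equation for $g$ has bounded forcing near $\pi=1$ (the surviving terms being $\log A(t)$ together with contributions that vanish as $\pi\uparrow 1$), so that $g$ and $g_\pi$ stay bounded up to the boundary and indeed the $g$-difference is $\O(1-\pi)$. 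A minor secondary step is the uniform control of the $o(1)$ error in the expansion of $\log(\hat{k}-\pi h^1_\pi)$ coming from the subleading $\hat{k}(t)$ and $g_\pi$ terms, which I expect to be routine once $g$ is known to be Lipschitz in $\pi$ up to $\pi=1$.
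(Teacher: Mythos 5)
Your proof is correct, but it follows a genuinely different route from the paper. The paper argues probabilistically: near the cash crunch it replaces the PDE by the control sub-problem $C(y;x)$ of consuming the residual cash $y=(1-\pi)x$ until the shock expires at an independent $Exp(\la_{10})$ time or the horizon $T$ is reached; since the stock wealth dominates, optimal consumption is asymptotically \emph{linear} in remaining cash, so scaling cash by $d$ shifts the running log-utility by $\log d$, discounted at the effective rate $\rho+\la_{10}$ over $[0,T-t]$, which yields $\int_0^{T-t}\e^{-(\rho+\la_{10})s}\log d\,ds = \frac{1-\e^{-(\rho+\la_{10})(T-t)}}{\rho+\la_{10}}\log d$ directly. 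You instead perform a singularity analysis of the PDE \eqref{eq:h-finite-horizon}: the ansatz $h^1=A(t)\log(1-\pi)+g(t,\pi)$, matching of the $\log(1-\pi)$ coefficients (correctly using that $-\rho h^1$ and $\la_{10}(\hat h-h^1)$ combine to $-(\rho+\la_{10})$, and that $-\log(\hat k-\pi h^1_\pi)=\log(1-\pi)-\log A(t)+o(1)$), the ODE $A'-(\rho+\la_{10})A+1=0$ with $A(T)=0$, and its solution — all of which check out, and your limit $A(t)\to\frac{1}{\rho+\la_{10}}$ is consistent with the infinite-horizon formula \eqref{eq:h-log}. Each approach buys something: yours delivers the full boundary-layer structure of $h^1$ (precisely what underlies the numerical boundary condition used after the lemma), and in fact your expansion \emph{implies} the paper's key unproved assertion, since $c^{1,*}=x/(\hat k(t)-\pi h^1_\pi)\sim x(1-\pi)/A(t)$ is linear in cash with explicit slope $1/A(t)$; the paper's scaling argument is shorter and gives the economic reading of the coefficient as discounting at the shock-expiry-adjusted rate. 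On rigor, the two are comparable: the paper asserts linearity of consumption in cash with a ``$\simeq$'', while you assume the log-plus-regular decomposition and support it by a dominant-balance exclusion of power singularities — a gap you candidly flag, and no larger than the paper's own. One quibble: with $g$ merely Lipschitz the regular part contributes $\O$ of the distance to the boundary rather than the stated $o(\cdot)$, but the paper's own error term (and indeed its ``for small $1-\pi$'' phrasing, where the two arguments sit at boundary distances $d\pi$ and $\pi$) is equally loose, and you read the intended geometry correctly.
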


\begin{proof}
When most of the wealth is in the stock and $M_t = 1$, consumption is driven by the amount of cash available, $y = (1-\pi)x$. Indeed, because the agent must maintain positive consumption, she concentrates on avoiding a complete cash depletion.
Let $\tau \sim Exp(\la_{10})$ be the next jump-time of $M$. Then the agent solves the optimization sub-problem
\begin{align*}
W(t,1-\pi,x) = C(y; x)  \triangleq \sup_{(c_t)} \E \left[ \int_0^{\tau \wedge (T-t)} \! \e^{-\rho s} \log c_s \,ds + \e^{-\rho(\tau \wedge (T-t))} V^0( \tau, X^x_{\tau}+Y^y_\tau) \right],
\end{align*}
where $(Y_t)$ is the amount of cash at date $t$ obeying $dY_t = (rY_t - c_t)\,dt$, and $X_{t}$ is the wealth in stock obeying $X_t = x\e^{ r t}$. Since $(1-\pi)$ is small, $X_\tau \gg Y_\tau$ and therefore consumption is asymptotically linear in remaining cash, since $c_s$ effectively only appears in the $\log c_s$-term. Therefore, for any fraction $d < 1$,
\begin{align*}
C(d \cdot y; x) & \simeq \int_0^{T-t} \e^{-\rho s} \log( d \cdot c^*_s) \e^{-\la_{10}s}  + \la_{10}\e^{-(\rho + \la_{10})s} V^0(s, x \e^{r s}) \, ds \\
& = C(y;x ) + \int_0^{T-t} \{ \e^{-(\rho + \la_{10})s} \log d \} \,ds \\
& = C(y;x) + \frac{1-\e^{-(\rho + \la_{10})(T-t)}}{\rho+\la_{10}} \log d.
\end{align*}
Re-arranging the last expression in terms of $h^1$ we obtain \eqref{eq:cash-crunch-asympt}.
\end{proof}

We may use \eqref{eq:cash-crunch-asympt} as a numerical boundary condition to solve \eqref{eq:h-finite-horizon}. Namely, we construct a finite grid $\{ 0, \Delta \pi, 2 \Delta \pi, \ldots, 1-\Delta \pi\}$ and solve \eqref{eq:h-finite-horizon} using a finite-difference scheme, applying
$$
h^1(t,1-\Delta \pi) = h^1(t,1-2\Delta \pi)- \frac{1-\e^{-(\rho + \la_{10})(T-t)}}{\rho+\la_{10}} \log 2,
$$
which is now a finite boundary condition at $\pi = 1-\Delta \pi$. Such a numerical solution is presented in Figure \ref{fig:h1} which compares the resulting $h^1(t,\pi)$ to the Merton solution $\hat{h}(t)$. As $t \to 0$, the percent efficiency  loss is linear in $\pi$ (becoming proportional in available cash); as $t \to \infty$, the efficiency loss converges to the infinite-horizon formula \eqref{eq:perc-loss-log}.

 \begin{figure}
 \centering{\includegraphics[height=3in]{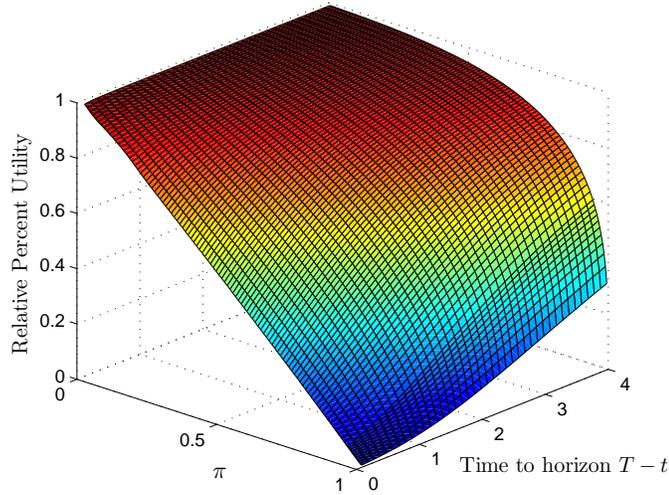}} \caption{Relative utility of $W(t,\pi,x)$ with respect to $\hat{V}(t,x)$ for different proportions of cash holdings $1-\pi$. On the vertical axis we plot the percent efficiency loss $\bt = 1 - \exp \left( \frac{ h^1(t,\pi) - \hat{h}(t) }{\hat{k}(t)} \right)$. Here $T=2, r=\rho=0.05, \mu=0.075, \sigma = 1/6$ and $L=0$.  \label{fig:h1}}
\end{figure}

\subsection{Terminal Utility without Consumption}
A related model recently appeared in \cite{Diesinger}, DKS hereafter. The latter paper was one of the inspirations for our analysis and considered optimization of terminal wealth under liquidity shocks. In contrast to our analysis, DKS did not allow for consumption but did allow a more general form of multiple liquidity shocks (rather than a two-state Markov chain). Similar to our results, DKS obtained explicit solutions for the case of log-utility. In this section we carry out the asymptotic analysis of their model which provides an instructive counterpart to our optimal consumption problem.

The DKS objective is
\begin{align}\label{eq:dks-objective}
\sup_{\pi \in \mathcal{A}} \E^{t,x} \left[ \log(X_T (1 - 1_{\{M_T = 1\}} \pi_T L) ) \right].
\end{align}
Thus, if the market is illiquid at terminal date $T$, total wealth is reduced by $L\%$.

First, we recall that
\begin{prop}
The Merton solution for maximizing terminal wealth is 
\begin{align}\label{eq:merton-log-terminal}
\sup_{\pi \in \mathcal{A}_0} \E^{t,x}[ \log (X_T)] = \log x + (r + \frac{(\mu-r)^2}{2\sigma^2})(T-t), \qquad \pi^* = \frac{(\mu-r)}{\sigma^2}.
\end{align}
\end{prop}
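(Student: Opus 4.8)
The plan is to prove \eqref{eq:merton-log-terminal} directly, without invoking the HJB machinery, by exploiting the multiplicative structure of log-utility. Since \eqref{eq:dks-objective} involves no consumption and the benchmark market is purely in the liquid Black--Scholes regime, wealth obeys $dX_t/X_t = [r + \pi_t(\mu-r)]\,dt + \pi_t\sigma\,dW_t$, a pure geometric process that keeps $X_t > 0$ almost surely for every square-integrable admissible $\pi$. First I would apply It\^o's formula to $\log X_t$ to obtain
\begin{equation*}
\log X_T = \log x + \int_t^T \left(r + \pi_s(\mu-r) - \half\pi_s^2\sigma^2\right)\,ds + \int_t^T \pi_s\sigma\,dW_s .
\end{equation*}

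The decisive observation is that the drift integrand $G(\pi) \triangleq r + \pi(\mu-r) - \half\pi^2\sigma^2$ is a concave quadratic in $\pi$, maximized pointwise (and independently of $s$ and $\omega$) at the constant $\hat{\pi} = (\mu-r)/\sigma^2$, with maximal value $G(\hat{\pi}) = r + (\mu-r)^2/(2\sigma^2)$. I would then take $\E^{t,x}$ of the above identity. Because the admissibility requirement $\E[\int_0^T \pi_s^2\,ds] < \infty$ defining $\mathcal{A}_0$ makes $\int_t^T \pi_s\sigma\,dW_s$ a square-integrable martingale with zero mean, and because $G(\pi_s) \le G(\hat{\pi})$ pathwise, for any admissible strategy
\begin{equation*}
\E^{t,x}[\log X_T] = \log x + \E^{t,x}\!\left[\int_t^T G(\pi_s)\,ds\right] \le \log x + G(\hat{\pi})(T-t),
\end{equation*}
which establishes the upper bound on the supremum.

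For the reverse inequality I would simply substitute the constant strategy $\pi_s \equiv \hat{\pi}$, which is trivially square-integrable hence admissible; then $G(\pi_s) \equiv G(\hat{\pi})$ and the inequality above becomes an equality, so the bound is attained. Taking the supremum over $\mathcal{A}_0$ therefore yields exactly \eqref{eq:merton-log-terminal} with optimal investment fraction $\pi^* = \hat{\pi} = (\mu-r)/\sigma^2$, and since the maximizer is a deterministic constant it is automatically $\F$-adapted and independent of wealth.

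The only point requiring care is the zero-mean property of the stochastic integral, i.e.\ that $\int_t^T \pi_s\sigma\,dW_s$ is a genuine martingale rather than merely a local martingale; this is precisely what the square-integrability condition on $\pi$ in the definition of $\mathcal{A}_0$ provides, via the It\^o isometry. No verification theorem is needed here, because the pointwise maximization of a deterministic integrand simultaneously delivers the sharp upper bound and an explicit admissible maximizer that attains it.
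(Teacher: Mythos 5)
Your proof is correct, and it takes a different route from the paper in the trivial sense that the paper offers no proof at all: the proposition is recalled as the classical Merton terminal-wealth result, in the same spirit as Propositions 2 and 4, whose cited treatment in Karatzas--Shreve goes through an HJB/verification argument. Your argument is the standard pathwise proof special to logarithmic utility: It\^o's formula turns the control problem into pointwise maximization of the deterministic concave quadratic $G(\pi) = r + \pi(\mu-r) - \half \pi^2\sigma^2$, the square-integrability built into $\mathcal{A}_0$ makes $\int_t^T \pi_s\sigma\,dW_s$ a true zero-mean martingale via the It\^o isometry, and the constant strategy $\hat{\pi}$ attains the bound. What this buys is a fully self-contained, verification-free proof that also explains structurally \emph{why} the log investor is myopic (the optimizer is deterministic, wealth-independent and horizon-independent), a fact the paper later invokes when discussing $\pi^*_{DKS}$. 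One caveat worth a sentence in your write-up: as the paper defines admissibility, $\mathcal{A}_0$ carries the no-shorting/no-leverage constraint $\pi_t \in [0,1]$; your pointwise maximization over all real $\pi$ still yields the upper bound (the supremum over $[0,1]$ is dominated by the unconstrained one), but attainment requires $0 \le (\mu-r)/\sigma^2 \le 1$, which is the implicit standing assumption here --- compare the condition $1 > (\mu-r)/\sigma^2 > 0$ imposed for the expansion of $\pi^*_{DKS}$, and the separate treatment of large Sharpe ratios in Section \ref{sec:large-sharpe-log} and Remark \ref{rem:large-sharpe-hara}. With that proviso your argument is complete.
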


Denote by $J^0(t,x), J^1(t,\pi,x)$ the value functions corresponding to \eqref{eq:dks-objective}. Then $J^i$ solve the system
\begin{align*}
\left\{ \begin{aligned}
 J^{0}_t + \sup_{\pi \in [0,1]} \left\{ x(r + (\mu-r) \pi) J^{0}_x + \half x^2 \pi^2 \sigma^2 J^{0}_{xx}  + \lambda_{01}( J^{1}(t,x,\pi) - J^{0}(t,x)) \right\}& = 0, \\
 J^{1}_t + x(r+(\alpha-r) \pi) J^{1}_x + \lambda_{10} (J^{0}(t,x) - J^{1}(t,\pi,x)) + \pi(1-\pi) (\alpha-r) J^{1,k}_\pi &  = 0,\\
 J^{0}(T,x) = \log x, \qquad\text{and}\qquad J^{1}(T,\pi,x)  = \log x(1-\pi L). &
\end{aligned}\right.
\end{align*}

When $\alpha =r$, the solution reads \cite[Section 5]{Diesinger} $J^0(t,x) = \log x + f^0(t)$ and $J^1(t,\pi,x) = \log x + f^1(t,\pi)$ where
\begin{align}\label{eq:f0-dks}
f^0(t) & = \Bigl[ \left(r +(\mu-r)\pi^*_{DKS} - \half \sigma^2 (\pi^*_{DKS})^2 + \frac{\la_{01}}{\la_{10}}(r+\la_{10} \log(1-\pi^*_{DKS} L)) \right)\frac{1-\e^{-(\la_{01}+\la_{10})(T-t)}}{\la_{01}+\la_{10}} \\ \notag
& \qquad  -\frac{r}{\la_{10}}( \e^{-\la_{10}(T-t)} - \e^{-(\la_{01}+\la_{10})(T-t)}) \Bigr]\cdot \frac{\la_{01}}{\la_{01}+\la_{10}}  + \Bigl[ \frac{\la_{01}r }{\la_{10}^2} ( -1 + \e^{-\la_{10}(T-t)}) \\ \notag
& \qquad  + \left(r + (\mu-r)\pi^*_{DKS} - \half \sigma^2 (\pi^*)^2 + \frac{\la_{01}}{\la_{10}}(r+ \la_{10}\log(1-\pi^*_{DKS} L)) \right)(T-t) \Bigr]\cdot \frac{\la_{10}}{\la_{01}+\la_{10}} \\ \notag
f^1(t,\pi) & = \int_t^T (\la_{10} f^0(s) + (r+\la_{10}\log(1-\pi L) ) ) \e^{-\la_{10}(s-t)} \,ds + \log(1-\pi L) \e^{-\la_{10}(T-t)},
\end{align}
with optimal investment proportion $\pi^*_{DKS}$ defined through
\begin{align}\label{eq:pi-dks}
0 = (\mu-r) - \sigma^2 \pi^*_{DKS} - \la_{01} \frac{L}{1-L \pi^*_{DKS}}.
\end{align}

We proceed to understand the asymptotics of the above expressions when $\la_{01}$ is small. First, expanding \eqref{eq:pi-dks} in $\la_{01}$ we find (subject to the condition $1 > (\mu-r)/\sigma^2 > 0$)
\begin{align}\notag
\pi^*_{DKS} & = \frac{ \sigma^2 + (\mu-r) L - \sqrt{ (\sigma^2 - (\mu-r) L)^2 + 4 \sigma^2 L^2 \la_{01}}}{2 \sigma^2 L} 
\\ & =  \frac{\mu-r}{\sigma^2} - \la_{01} \frac{L}{ \sigma^2 - (\mu-r) L} + \mathcal{O}(\la_{01}^2). \label{eq:pi-star-dks-small}
\end{align}
As expected, risk of illiquidity reduces holdings of stock, with the effect increasing as (i) $\sigma$ decreases or (ii) $\mu$ increases or (iii) the liquidity jump loss $L$ increases. For typical parameter values the denominator above is quite small and therefore the relative changes in $\pi^*_{DKS}$ can be quite big even for a small change in the underlying parameter.
Note that if $L = 0$ then the optimal strategy is always equal to the Merton solution $\hat{\pi}$, due to the form of the first-order condition of $\pi^*_{DKS}$. This phenomenon of myopic investment is unique to log-utility. In the same vein, $\pi^*_{DKS}$ is in fact independent of $\la_{10}$ to first order.

Comparing the expression \eqref{eq:pi-star-dks-small} of the optimal investment proportion to a similar expression \eqref{eq:pi-log-correction} derived in Section \ref{sec:log-utility} we see the impact of consumption on investment. Indeed, the two problems are identical except that intermediate consumption introduces the cash-crunch feature which increases the impact of illiquidity. Analytically, this difference is exactly represented by
$$
\pi^*_{DKS} - \pi^* = \la_{01} \frac{1-L}{(1-\hat{\pi}L)^2} \frac{ g(\hat{\pi})^{\la_{10}/\rho}}{1-g(\hat{\pi})^{\la_{10}/\rho}}.
$$
As already discussed, unless $\hat{\pi}$ is very close to 1 or $\la_{10}$ is small, this difference is relatively small and therefore there is little difference between maximization of terminal wealth and maximization of utility of consumption.

Plugging into the expression for $f^0$ in \eqref{eq:f0-dks} and simplifying we end up with
\begin{align*}
f^0(t) & = (r + \frac{(\mu-r)^2}{2\sigma^2})(T-t) + \la_{01}\Bigl[ \frac{(\mu-r)^2}{2 \sigma^2\la_{10}^2}\cdot \left(1-\e^{-\la_{10}(T-t)}-\la_{10}(T-t) \right)  \\ & \qquad +\log \left(1- \frac{(\mu-r) L}{\sigma^2} \right)(T-t) \Bigr] + \mathcal{O}(\la_{01}^2).
\end{align*}
Comparing with the limiting case $\la_{01} =0$ in \eqref{eq:merton-log-terminal}, we obtain that the percentage loss in utility is $\bt^{(DKS)}(t) = \la_{01} \bt_1^{(DKS)}(t) + \O(\la_{01}^2)$ with
\begin{align}
\bt_1^{(DKS)}(t) = \Bigl[ \frac{(\mu-r)^2}{2 \sigma^2\la_{10}^2}\cdot \left(\e^{-\la_{10}(T-t)}+\la_{10}(T-t)-1 \right)  - \log \left(1- \frac{(\mu-r) L}{\sigma^2} \right)(T-t) \Bigr].
\end{align}
Both terms above are always positive and decrease (approximately linearly) in magnitude as $t \to T$.
Again, the squared Sharpe ratio makes an appearance. Counterintuitively, as $\la_{10} \to \infty$, the illiquidity correction does not vanish, even if the amount of time spent in the illiquid regime is negligible. This is because there is always a possibility of an illiquid shock right before $T$, proportional to $\la_{01}$ which would produce a loss of $L\%$ of equity.

Similar analysis can be done for the fast-scale liquidity freezes:
\begin{lemma}\label{lem:fast-dks}
Rescale the generator of $M$ as $Q^\eps = \eps^{-1} Q$. Then as $\eps \to 0$ we have
\begin{align*}
\pi^* & \to \frac{(\mu-r-\la_{01}\bar{L})}{\sigma^2},\qquad
f^0(t) \to (T-t)r + \bar{\la}(T-t) \frac{ (\mu-r-\la_{01}\bar{L})^2}{\sigma^2}.
\end{align*}
Therefore the homogenized percent efficiency loss is
\begin{align}\label{eq:perc-loss-dks}
\bt^{(DKS)}_{hom}(t) = 1-\exp \left( (T-t) \frac{1}{2} \{ \bar{\la}\tilde{\theta}^2 - \theta^2\} \right).
\end{align}
\end{lemma}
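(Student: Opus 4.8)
The plan is to exploit the fact that, for $\alpha=r$, the DKS value functions are available in closed form through \eqref{eq:f0-dks}--\eqref{eq:pi-dks}, so the homogenized limit reduces to a direct asymptotic evaluation of these explicit expressions after the substitutions $\la_{01}\mapsto\la_{01}/\eps$, $\la_{10}\mapsto\la_{10}/\eps$ and $L\mapsto\eps\bar L$. I would treat the three assertions in sequence: the limit of $\pi^*_{DKS}$, then of $f^0(t)$, and finally read off $\bt^{(DKS)}_{hom}$. For the investment fraction I substitute into the first-order condition \eqref{eq:pi-dks}. The crucial observation is that the rescaled product $\la_{01}^\eps L^\eps=(\la_{01}/\eps)(\eps\bar L)=\la_{01}\bar L$ stays finite, while $L^\eps\pi^*_{DKS}=\eps\bar L\,\pi^*_{DKS}\to 0$, so the denominator $1-L^\eps\pi^*_{DKS}\to 1$. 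Passing to the limit in \eqref{eq:pi-dks} gives $0=(\mu-r)-\sigma^2\pi^*-\la_{01}\bar L$, i.e.\ $\pi^*\to(\mu-r-\la_{01}\bar L)/\sigma^2$, which I abbreviate by $\bar\pi$, and I set $\tilde\theta\triangleq\sigma\bar\pi=(\mu-r-\la_{01}\bar L)/\sigma$ for the modified Sharpe ratio.

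Second, for $f^0(t)$ I examine \eqref{eq:f0-dks} term by term under the rescaling. Since $t<T$, both $\e^{-\la_{10}^\eps(T-t)}$ and $\e^{-(\la_{01}^\eps+\la_{10}^\eps)(T-t)}$ vanish super-polynomially, and every coefficient carrying a prefactor $1/(\la_{01}^\eps+\la_{10}^\eps)$, $1/\la_{10}^\eps$ or $1/(\la_{10}^\eps)^2$ tends to zero. Consequently the entire first square bracket, weighted by $\la_{01}^\eps/(\la_{01}^\eps+\la_{10}^\eps)=1-\bar\la$, drops out, and in the second square bracket only the term $G^\eps(T-t)$ survives, weighted by $\la_{10}^\eps/(\la_{01}^\eps+\la_{10}^\eps)=\bar\la$, where $G^\eps\triangleq r+(\mu-r)\pi^*_{DKS}-\half\sigma^2(\pi^*_{DKS})^2+(\la_{01}^\eps/\la_{10}^\eps)\bigl(r+\la_{10}^\eps\log(1-\pi^*_{DKS}L^\eps)\bigr)$. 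The one delicate point is the indeterminate form $\la_{10}^\eps\log(1-\pi^*_{DKS}L^\eps)$; expanding the logarithm gives $\la_{10}^\eps\log(1-\eps\bar L\,\pi^*_{DKS})=-\la_{10}\bar L\,\bar\pi+\O(\eps)$, so that $G^\eps\to r+(\mu-r)\bar\pi-\half\sigma^2\bar\pi^2+(\la_{01}/\la_{10})\,r-\la_{01}\bar L\,\bar\pi$. Using $(\mu-r-\la_{01}\bar L)\bar\pi=\sigma^2\bar\pi^2=\tilde\theta^2$, the trading terms collapse to the modified squared Sharpe contribution, and I recombine the residual riskless pieces through the identity $\bar\la(1+\la_{01}/\la_{10})=1$ (valid since $\bar\la=\la_{10}/(\la_{01}+\la_{10})$), which restores a clean $r(T-t)$ term. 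This yields the stated limit $f^0(t)\to(T-t)r+\bar\la(T-t)\tilde\theta^2/2$ feeding the risky part.

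Third, the efficiency loss follows immediately by comparison with the Merton terminal-utility benchmark \eqref{eq:merton-log-terminal}. Since $J^0(t,x)=\log x+f^0(t)$ and the Merton value is $\log x+(r+\half\theta^2)(T-t)$ with $\theta=(\mu-r)/\sigma$, the defining relation
$$\log x+f^0(t)=\log\bigl((1-\bt)x\bigr)+(r+\tfrac{1}{2}\theta^2)(T-t)$$
gives $1-\bt^{(DKS)}_{hom}=\exp\bigl(\lim_{\eps\to0}f^0(t)-(r+\half\theta^2)(T-t)\bigr)$; substituting the limit of $f^0$ and cancelling the common $r(T-t)$ reproduces $\bt^{(DKS)}_{hom}(t)=1-\exp\bigl((T-t)\tfrac12\{\bar\la\tilde\theta^2-\theta^2\}\bigr)$, which is \eqref{eq:perc-loss-dks}.

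The main obstacle is the bookkeeping in the $f^0$ asymptotics: one must correctly isolate the single surviving term among the many in \eqref{eq:f0-dks} and, above all, resolve the $\infty\cdot 0$ indeterminacy in $\la_{10}^\eps\log(1-\pi^*_{DKS}L^\eps)$ via a first-order Taylor expansion. This is precisely the step where the jump rescaling $L=\eps\bar L$ converts the liquidity-shock jumps into the effective negative drift $\la_{01}\bar L$ on the stock, mirroring the homogenized asset dynamics $\bar S$ discussed after \eqref{eq:s-dynamics}; everything else is a routine cancellation of exponentially small and $\O(\eps)$ terms.
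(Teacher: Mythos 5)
Your proposal is correct and takes essentially the same route as the paper, which omits the proof of Lemma \ref{lem:fast-dks} with the remark that it ``follows from expressions \eqref{eq:pi-dks}--\eqref{eq:f0-dks} after tedious arithmetic simplifications'' --- precisely the substitutions $\la_{01}\mapsto\la_{01}/\eps$, $\la_{10}\mapsto\la_{10}/\eps$, $L\mapsto\eps\bar{L}$ and limit-passages you carry out, including the one genuinely delicate step, resolving the indeterminacy $\la_{10}^\eps\log(1-\pi^*_{DKS}L^\eps)\to-\la_{10}\bar{L}\,\bar{\pi}$, and the recombination $\bar{\la}(1+\la_{01}/\la_{10})=1$. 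One remark: your derived limit $f^0(t)\to r(T-t)+\bar{\la}(T-t)\,\tilde{\theta}^2/2$ is the correct one --- it is the version consistent with \eqref{eq:perc-loss-dks} and with the Merton benchmark \eqref{eq:merton-log-terminal} --- whereas the lemma's displayed $\frac{(\mu-r-\la_{01}\bar{L})^2}{\sigma^2}$ is missing the factor $\frac{1}{2}$ (an internal inconsistency in the statement that your computation implicitly corrects).
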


We omit the proof of Lemma \ref{lem:fast-dks} which is similar to the computations for \eqref{eq:perc-loss-fast-log} and follows from expressions \eqref{eq:pi-dks}-\eqref{eq:f0-dks} after tedious arithmetic simplifications. Also note that the efficiency loss in \eqref{eq:perc-loss-dks} essentially matches \eqref{eq:perc-loss-fast-log} except for the discount factor.

\section{Conclusion}\label{sec:conclusion}
In this paper we considered a perturbed version of the classical Merton optimal investment-consumption model. Our modification accounts for possibility of liquidity shocks during which the agent cannot trade and cannot use her stock holdings for consumption. The liquidity shocks also lead to negative jumps in risky asset price and reduce its excess return. We obtained explicit formulas for HARA utility-maximizers, especially for the special cases of $\gamma=0, \pm 0.5, -1$. The economic interpretation of our results was carried out by deriving the percent efficiency loss relative to the respective Merton problem solution. We showed that the asymptotic effect of illiquidity can be captured by appropriately modifying the Sharpe ratio of the underlying asset inside the classical formulas. This gives a simple rule to the investor in terms of accounting for liquidity shocks. Overall, we found that for realistic parameter values, the efficiency loss is primarily driven by the jump
 parameter $L$.

Our model can be easily extended if one is willing to sacrifice analytical tractability. Natural extensions would allow for more complex stock dynamics (such as general jump-diffusion processes, non-zero volatility in the illiquid regime), arbitrary utility functions, further market frictions or multiple stocks. The resulting HJB equations would then be modified version of the respective classical optimal investment models, see e.g.\ \cite{KarShreveMethods}. Also, other risk objectives might be considered to judge the attitude of the investor towards illiquidity. Our first setting where probability of a liquidity shock is small resembles the lifetime investment problem, in which case quantities such as probability of ruin become important risk metrics. Another direction that we leave to future work is to consider the effect of liquidity shocks on indifference pricing of contingent claims.

\section*{Acknowledgment}
We thank  Mihai Sirbu and the participants of the 2010 Young Researchers' Workshop on Finance (Tokyo) for their feedback on preliminary versions of this paper.

\bibliography{liquidityAsympt}

\begin{thebibliography}{}

\bibitem[\protect\citeauthoryear{B{\"a}uerle \& Rieder}{B{\"a}uerle and
  Rieder}{2004}]{BauerleRieder04}
B{\"a}uerle, N. and U.~Rieder, 2004:
\newblock Portfolio optimization with {M}arkov-modulated stock prices and
  interest rates.
\newblock {\em IEEE Trans. Automat. Control}, {\bf 49}(3), 442--447.

\bibitem[\protect\citeauthoryear{{\c{C}}etin, Jarrow \& Protter}{{\c{C}}etin
  et~al.}{2004}]{CetinJarrowProtter04}
{\c{C}}etin, U., R.~A. Jarrow, and P.~Protter, 2004:
\newblock Liquidity risk and arbitrage pricing theory.
\newblock {\em Finance Stoch.}, {\bf 8}(3), 311--341.

\bibitem[\protect\citeauthoryear{{\c{C}}etin \& Rogers}{{\c{C}}etin and
  Rogers}{2007}]{RogersCetin07}
{\c{C}}etin, U. and L.~C.~G. Rogers, 2007:
\newblock Modeling liquidity effects in discrete time.
\newblock {\em Math. Finance}, {\bf 17}(1), 15--29.

\bibitem[\protect\citeauthoryear{Cretarola, Gozzi, Pham \& Tankov}{Cretarola
  et~al.}{2010}]{TankovPhamGozzi}
Cretarola, A., F.~Gozzi, H.~Pham, and P.~Tankov, 2010:
\newblock Optimal consumption policies in illiquid markets.
\newblock {\em Finance Stoch.}, {\bf 14}, to appear.

\bibitem[\protect\citeauthoryear{Diesinger, Kraft \& Seifried}{Diesinger
  et~al.}{2010}]{Diesinger}
Diesinger, P., H.~Kraft, and F.~Seifried, 2010:
\newblock Asset allocation and liquidity breakdowns: what if your broker does
  not answer the phone?
\newblock {\em Finance and Stochastics}, {\bf 14}, 343--374.

\bibitem[\protect\citeauthoryear{Fleming \&
  Hern{\'a}ndez-Hern{\'a}ndez}{Fleming and
  Hern{\'a}ndez-Hern{\'a}ndez}{2003}]{FlemingHernandez03}
Fleming, W.~H. and D.~Hern{\'a}ndez-Hern{\'a}ndez, 2003:
\newblock An optimal consumption model with stochastic volatility.
\newblock {\em Finance Stoch.}, {\bf 7}(2), 245--262.

\bibitem[\protect\citeauthoryear{Fleming \& Soner}{Fleming and
  Soner}{1993}]{FlemingSoner}
Fleming, W.~H. and H.~M. Soner, 1993:
\newblock {\em Controlled {M}arkov processes and viscosity solutions},
  volume~25 of {\em Applications of Mathematics (New York)}.
\newblock Springer-Verlag, New York, xvi+428.

\bibitem[\protect\citeauthoryear{Fouque, Papanicolaou \& Sircar}{Fouque
  et~al.}{2000}]{FPS-Book}
Fouque, J.-P., G.~Papanicolaou, and K.~R. Sircar, 2000:
\newblock {\em Derivatives in financial markets with stochastic volatility}.
\newblock Cambridge University Press, Cambridge, xiv+201.

\bibitem[\protect\citeauthoryear{Il'in, Khasminskii \& Yin}{Il'in
  et~al.}{1999a}]{IlinYinJMAA99}
Il'in, A.~M., R.~Z. Khasminskii, and G.~Yin, 1999a:
\newblock Asymptotic expansions of solutions of integro-differential equations
  for transition densities of singularly perturbed switching diffusions: rapid
  switchings.
\newblock {\em J. Math. Anal. Appl.}, {\bf 238}(2), 516--539.

\bibitem[\protect\citeauthoryear{Il'in, Khasminskii \& Yin}{Il'in
  et~al.}{1999b}]{IlinYinJOTA99}
Il'in, A.~M., R.~Z. Khasminskii, and G.~Yin, 1999b:
\newblock Singularly perturbed switching diffusions: rapid switchings and fast
  diffusions.
\newblock {\em J. Optim. Theory Appl.}, {\bf 102}(3), 555--591.

\bibitem[\protect\citeauthoryear{Kamke}{Kamke}{1944}]{KamkeBook}
Kamke, E., 1944:
\newblock {\em Differentialgleichungen. {L}\"osungsmethoden und {L}\"osungen.
  {B}and {I}. {G}ew\"ohnliche {D}ifferentialgleichungen}.
\newblock Mathematik und ihre Anwendungen in Physik und Technik. Band 181.
  Akademische Verlagsgesellschaft, Leipzig, xxvi+666.
\newblock 3d ed.

\bibitem[\protect\citeauthoryear{Karatzas \& Shreve}{Karatzas and
  Shreve}{1998}]{KarShreveMethods}
Karatzas, I. and S.~Shreve, 1998:
\newblock {\em Methods of Mathematical Finance}.
\newblock Applications of Mathematics, Vol 39. Springer-Verlag, New York.

\bibitem[\protect\citeauthoryear{Matsumoto}{Matsumoto}{2006}]{Matsumoto}
Matsumoto, K., 2006:
\newblock Optimal portfolio of low liquid assets with a log-utility function.
\newblock {\em Finance Stoch.}, {\bf 10}(1), 121--145.

\bibitem[\protect\citeauthoryear{Merton}{Merton}{1971}]{Merton71}
Merton, R.~C., 1971:
\newblock Optimum consumption and portfolio rules in a continuous-time model.
\newblock {\em J. Econom. Theory}, {\bf 3}(4), 373--413.

\bibitem[\protect\citeauthoryear{Pham \& Tankov}{Pham and
  Tankov}{2008}]{PhamTankov08MF}
Pham, H. and P.~Tankov, 2008:
\newblock A model of optimal consumption under liquidity risk with random
  trading times.
\newblock {\em Math. Finance}, {\bf 18}(4), 613--627.

\bibitem[\protect\citeauthoryear{Rogers}{Rogers}{2001}]{Rogers00}
Rogers, L. C.~G., 2001:
\newblock The relaxed investor and parameter uncertainty.
\newblock {\em Finance Stoch.}, {\bf 5}(2), 131--154.

\bibitem[\protect\citeauthoryear{Rogers \& Zane}{Rogers and
  Zane}{2002}]{RogersZane}
Rogers, L.-C.-G. and O.~Zane, 2002:
\newblock A simple model of liquidity effects.
\newblock In {\em Advances in finance and stochastics}. Springer, Berlin,
  161--176.

\bibitem[\protect\citeauthoryear{Schwartz \& Tebaldi}{Schwartz and
  Tebaldi}{2006}]{SchwartzTebaldi}
Schwartz, E. and C.~Tebaldi, 2006:
\newblock Illiquid assets and optimal portfolio choice.
\newblock Technical report, UCLA and NBER.
\newblock Available at http://www.nber.org/papers/w12633.

\bibitem[\protect\citeauthoryear{Shreve \& Soner}{Shreve and
  Soner}{1994}]{ShreveSoner94}
Shreve, S.~E. and H.~M. Soner, 1994:
\newblock Optimal investment and consumption with transaction costs.
\newblock {\em Ann. Appl. Probab.}, {\bf 4}(3), 609--692.

\bibitem[\protect\citeauthoryear{Sotomayor \& Cadenillas}{Sotomayor and
  Cadenillas}{2009}]{SotomayorCadenillas09}
Sotomayor, L.~R. and A.~Cadenillas, 2009:
\newblock Explicit solutions of consumption-investment problems in financial
  markets with regime switching.
\newblock {\em Math. Finance}, {\bf 19}(2), 251--279.

\bibitem[\protect\citeauthoryear{Walter}{Walter}{1998}]{WalterBook}
Walter, W., 1998:
\newblock {\em Ordinary differential equations}, volume 182 of {\em Graduate
  Texts in Mathematics}.
\newblock Springer-Verlag, New York, xii+380.
\newblock Translated from the sixth German (1996) edition by Russell Thompson,
  Readings in Mathematics.

\bibitem[\protect\citeauthoryear{Zariphopoulou}{Zariphopoulou}{1992}]{Zariphop%
oulou92}
Zariphopoulou, T., 1992:
\newblock Investment-consumption models with transaction fees and
  {M}arkov-chain parameters.
\newblock {\em SIAM J. Control Optim.}, {\bf 30}(3), 613--636.

\end{thebibliography}
\bibliographystyle{jas99}

\appendix
\section*{Proof of Proposition \ref{thm:verify}}
\begin{proof}

The proof is largely standard and follows the classical Verification Theorem for stochastic control of diffusions \cite[Theorem IV.5.1]{FlemingSoner}. To simplify the presentation, we consider the case $u(0) > -\infty$. In the sequel we will primarily work with HARA utilities $u(x) = x^\gamma/\gamma$ which are infinite at the origin for $\gamma <0$. That case can be treated similarly to the proof below by using the resulting homothetic property of the value function (see Lemma \ref{lem:hom-degree-gamma} and the methods in \cite{ShreveSoner94}). Let us also mention the paper \cite{SotomayorCadenillas09} which had a similar verification theorem for optimal consumption in a regime-switching model.

Define $\tilde{u}(x) = u(x)-u(0)$. Then $\tilde{u}$ is another utility function, and
$ \E[ \int_0^\infty \e^{-\rho t} \tilde{u}(c_t) \,dt] = \E[ \int_0^\infty \e^{-\rho t} u(c_t) \,dt] -u(0)/\rho$. Moreover the admissible strategy sets are the same for the problem with $\tilde{u}(\cdot)$ and $u(\cdot)$. Therefore, without loss of generality we assume for the remainder of the proof that $u(0)=0$.

To apply the conclusions of \cite[Theorem IV.5.1]{FlemingSoner}, the following steps need to be verified.

{\it Step 1:}
Check that $V^i(\cdot) < \infty$ are finite. Case 1: $\sup_c u(c) = +\infty$. In this situation of ``positive risk aversion'', it may well be the case that the value functions are infinite since even a modest level of consumption can generate a lot of utility; see Proposition \ref{lem:merton-inf-hara} below. To avoid this we assume that the liquidity-unconstrained problem with value functions
$$
\tilde{V}^0(x) = \sup_{\pi,c \in \mathcal{A}_0} \E \left[ \int_0^\infty \e^{-\rho t} u(c_t(X^{x,\pi,c}_t)) \, dt \,| \, M_0 = 0 \right],
$$
and similarly for $\tilde{V}^1(\pi,x)$, are finite. We refer to \cite{SotomayorCadenillas09} for exhaustive analysis of that setting.
Since $\mathcal{A} \subseteq \mathcal{A}_0$ we immediately obtain that under the above assumption, $V^0(x) \le \tilde{V}^0(x) < \infty$ and  $V^1(\pi,x) \le \tilde{V}^1(\pi,x) < \infty$.

Case 2: $\sup_c u(c) <\infty$. In that case, we immediately have $V^1(\pi,x) \le V^0(x) \le \{ \sup_c u(c) \} \rho^{-1}$.

{\it Step 2:}
We next show that $J^i(\cdot)$ solving \eqref{eq:hjb-inf-horizon} are  upper bounds for the value functions.

Let $(\pi,c)$ be an admissible strategy. The assumptions on $(\pi,c)$ then guarantee that the sde in \eqref{eq:x-dynamics} has a unique strong solution (in fact, a closed-form representation is possible, see \cite{SotomayorCadenillas09}).

The basic idea is now to apply Ito's lemma to the semi-martingale $(X^{x,\pi,c}_t,M_t)$. Let $\theta \triangleq \inf \{t : X^{x,\pi,c}_t = 0 \}$. Since wealth is required to stay non-negative, no consumption is possible on $(\theta,\infty]$ and it follows that $V^0(0) = V^1(\pi,0) = 0$.
Fix initial wealth $x$ and constants $0 < a_1 < x < a_2 < \infty$ and $a_3 < 1$. Let $\tau \triangleq \tau_{1} \wedge \tau_2 \wedge \tau_3 \wedge \theta$, with
$$ \tau_1 \triangleq \inf \left\{ t: X^{x,\pi,c}_t \le a_1 \right\}, \quad \tau_2 \triangleq \inf \left\{ t: X^{x,\pi,c}_t \ge a_2 \right\}, \quad \tau_3 \triangleq \inf \left\{ t : \{\pi_t \ge a_3\} \cap \{M_t = 1\} \right\}.$$
For any $t < \infty$, the quantities $X^{x,\pi,c}$, $J^0(X^{x,\pi,c}_s)$, $ J^0_x(X^{x,\pi,c}_s)$, $J^1(\pi_s, X^{x,\pi,c}_s)$, being solutions of \eqref{eq:hjb-inf-horizon}, are local martingales. Moreover, they are bounded on  $[0,t \wedge \tau]$ and consequently are all true martingales until $t\wedge \tau$.  Applying the optional sampling theorem we obtain
\begin{align}\label{eq:remainder-term-hjb}
J^0(x) - \E \left[ \int_0^{t\wedge \tau} \e^{-\rho s} u(c_s) \,ds \right] \ge \E \Bigl[ \e^{-\rho t \wedge \tau} (J^0(X^{x,\pi,c}_{t\wedge \tau}) 1_{\{M_{t\wedge \tau} = 0\}} + J^1(\pi_{t\wedge \tau},X^{x,\pi,c}_{t\wedge \tau}) 1_{\{M_{t\wedge \tau} = 1\}} \Bigr].
\end{align}
Now the right hand side in \eqref{eq:remainder-term-hjb} is non-negative, so that $J^0(x) \ge \E^x \bigl[ \int_0^{t\wedge \tau} \e^{-\rho s} u(c_s) \,ds \bigr]$ and taking $a_1 \to 0, a_2 \to \infty, a_3 \to 1$ and using Monotone Convergence Theorem (MCT), we obtain $J^0(x) \ge \E^x \bigl[ \int_0^{t \wedge \theta} \e^{-\rho s} u(c_s) \, ds \bigr]$. Finally, taking $t \to \infty$ and again using MCT we find $J^0(x) \ge V^0(x; (\pi,c))$ as desired. Similar steps apply to $J^1(\pi,x)$.

{\it Step 3:} Verify existence of optimal controls. In our case sufficient first order conditions for the consumption optimizer above imply that
\begin{align*}
c^{1,*}  = (u')^{-1}(V^1_x - \frac{\pi}{x} V^1_\pi), \qquad
c^{0,*} = (u')^{-1}(V^0_x),
\end{align*}
Assumed smoothness of $V^i$ guarantees existence of $c^{i,*}$. The investment optimizer solves
$$
(\mu-r)x V^0_x + \pi^* x^2 \sigma^2 V^0_{xx} + V^1_\pi \left( g(\pi^*), (1-\pi^* L)x \right)g'(\pi^*) - L x V^1_x \left(g(\pi^*), (1-\pi^* L)x \right) = 0,
$$
and an interior solution on $[0,1]$ is guaranteed if $\mu > r$. Indeed, in the latter case the above expression is positive when $\pi=0$ and goes to $-\infty$ as $\pi \to 1$ (we have $u'(0) = +\infty$ and as proportion of liquid wealth shrinks, consumption must be reduced to zero, implying that  $\lim_{\pi \uparrow 1} V^1_\pi = -\infty$).

\end{proof}

\section*{Proof of Lemma \ref{lem:power-ode-soln}}
\begin{proof}
The ode \eqref{eq:h-deriv} is separable and therefore the corresponding theory may be applied \cite[p.31]{WalterBook}. We begin by re-writing in terms of indefinite integrals as
\begin{align}
\int dz + C = \int \frac{d \phi}{{\gamma}(\rho - \gamma
r + \la_{10}) \left(\frac{\phi}{(1-\gamma)}\right)^{1-1/\gamma}-\gamma\phi-
\gamma\la_{10} \hat{f}}.
\end{align}

First consider the case $\gamma>0$. An easy argument shows that $\phi(\cdot)$ is increasing and therefore the solution satisfying $\phi(0) = 0$ is given implicitly by
\begin{align}
z = \int_0^{\phi(z)}  \!\!\frac{d x}{{\gamma}\tilde{H}(x)}, \qquad \tilde{H}(x) = \left\{(\rho - \gamma r +\la_{10}) \left( \frac{x}{1-\gamma} \right)^{1-1/\gamma} - x - \la_{10}\hat{f} \right\}.
\end{align}
Recall the horizontal asymptote $\lim_{z \to \infty} \phi(z) = (1-\gamma)f(0)^{\frac{\gamma}{\gamma-1}}$. Checking that $\tilde{H}((1-\gamma)f(0)^{\frac{\gamma}{\gamma-1}}) =H(f(0)) = 0$, and using
$\tilde{H}(x)>0$ for $x<f(0)$  and $\gamma>0$ we see that $\lim_{z \uparrow \phi(\infty)}\int_0^z \frac{dx}{\gamma \tilde{H}(x)} =+\infty$ and thus the boundary conditions are matched correctly.

When $\gamma < 0$, $\phi(z)$ is decreasing; therefore for $N > \phi(\infty)$ the unique solution to \eqref{eq:h-deriv} satisfying $\phi(0) = N$ is given by
\begin{align}\label{eq:int-neg-gamma}
z = -\int_{\phi(z)}^N \frac{d x}{\gamma \tilde{H}(x)}.
\end{align}
Moreover, since the continuous function $\tilde{H}$ is bounded on the compact domain $[\phi(\infty),N]$, the solution $\phi(\cdot)$ is Lipschitz on $[0,\infty)$.
Since the leading power in $\tilde{H}(x)$ is $1-\gamma^{-1}>1$ for negative $\gamma$, the integral in \eqref{eq:int-neg-gamma} is proper and converges as $N\to \infty$. Therefore, taking the limit, we find that the unique solution with $\lim_{z \to 0}\phi(z) = +\infty$ is
\begin{align}\label{eq:tildeg-soln}
z = \int_{\phi(z)}^\infty \frac{-d x}{\gamma\tilde{H}(x)}.
\end{align}
Fix $N'$ large enough. Then for $x> N'$, $\gamma \tilde{H}(x) = \gamma (\rho - \gamma r +\la_{10})\left(\frac{x}{1-\gamma}\right)^{1-1/\gamma} + o(x^{1-1/\gamma})$ and therefore
$$
\int_{N'}^\infty \frac{-d x}{{\gamma}\tilde{H}(x)} \simeq \int_{N'}^\infty \frac{1}{A} x^{1/\gamma-1} dx  = -\frac{\gamma}{A} (N')^{1/\gamma}, \qquad A\triangleq (\rho - \gamma r +\la_{10}) (1-\gamma)^{\gamma^{-1}-1} + o(N').
$$
Comparing with \eqref{eq:tildeg-soln} we find that
$$
\phi(\eps) = \eps^\gamma (1-\gamma)^{1-\gamma}(\rho - \gamma r +\la_{10})^\gamma + \O(\eps).
$$
Thus, $\phi$ grows polynomially (at rate $\gamma$) around zero. Finally, transforming back into the $\pi$-variable, and using that for $\eps$ small enough, $\e^{-\eps} > 1-2 \eps$ we obtain
$$
f(1-\eps) = \O( \eps^\gamma) 
$$
showing that $f$ also grows polynomially as the cash crunch is approached.
\end{proof}

\section{Proof of Theorem \ref{thm:finite-horizon}}
\begin{proof}
Once again following the form of the unperturbed problem, we posit
that under liquidity shocks, the value functions are $V(t,x) =
k^0(t) \log x + h^0(t)$, $W(t,\pi,x) = k^1(t) \log x + h^1(t,\pi)$.
Substituting the ansatz for the illiquid regime value function $W$
we obtain that $c^{1,*} = \frac{x}{k^1(t)- \pi h_{\pi}^1(t,\pi)}$
and after some simplifications that $k^1 \equiv \hat{k}$. Moreover, the
wealth-independent term $h^1(t,\pi)$ solves \eqref{eq:h-finite-horizon}.

Returning to the liquid regime, the ansatz implies that $c^{0,*} = x/k^0(t)$ and
\begin{multline*}
k^0_t \log x  + h^0_t  + r k^0(t) - 1 + \la_{01}\log
x[k^1(t)-k^0(t)] -\la_{01} h^0(t) \\+ \log x - \log k^0(t)
-\rho[k^0(t) \log x + h^0(t)]
 +  \xi(t,\pi^*)= 0.
\end{multline*}
where
\begin{equation*}
\xi(t,\pi^*)=\sup_{\pi \in [0,1]} \Bigl\{ (\mu-r) \pi k^0(t) - \half
\sigma^2 \pi^2 k^0(t) +\la_{01} \left[k^1(t) \log(1-\pi L)  + h^1
\bigl(t, g(\pi) \bigr)\right] \Bigr\}.
\end{equation*}
The terminal condition is $k^0(T) = 0, h^0(T) = 0$. Considering the terms multiplying $\log x$, we find again
$k^0 \equiv \hat{k}$. Thus optimal consumption in the liquid regime
$c^{0,*}$ is again unaffected by liquidity shocks. Similarly, collecting terms of $\O(1)$, we see that $h^0$ solves
the first-order ode
\begin{equation*}
h^0_t-\rho h^0(t)-\la_{01} h^0(t) + r k^0(t)-1-\log k^0(t) +\xi(t,\pi^*)=0.
\end{equation*}
Since this first-order ode is linear and $\xi(t,\pi^*)$ is independent
of $h^0$, we can apply variation of constants to obtain \eqref{eq:h0-finite} and \eqref{eq:xi-finite}.
Next, from equation \eqref{eq:xi-finite} we know the optimal investment fraction
$\pi^*$ satisfies
\begin{equation}\label{eq:pi-finite-log}
(\mu-r) - \sigma^2 \pi^* + \la_{01} \left[\frac{-L}{1-\pi^*L}+
\frac{1}{k^0(t)} h^1_{\pi} (t, g(\pi^*))g'(\pi^*)\right]=0.
\end{equation}
Using equation \eqref{eq:pi-finite-log} we expand $\pi^*$ to the
leading order for small $\la_{01}$ to obtain the first half of \eqref{eq:small-la-finite}.
Plugging-in this asymptotic expression  for $\pi^*$ into \label{eq:h0}
and expanding \eqref{eq:h0-finite} for small $\la_{01}$ to the leading
order, we recover the second half of \eqref{eq:small-la-finite}.
\end{proof}

\end{document}